
\documentclass[letterpaper, 10 pt, conference]{ieeeconf}  

\IEEEoverridecommandlockouts                              

\overrideIEEEmargins                                      



\usepackage[pdftex]{graphicx}

\usepackage{amsthm}
\usepackage{array}
\usepackage{url}
\usepackage{cite}
\usepackage{algorithmic}
\usepackage{graphicx}
\usepackage{textcomp}
\usepackage{xcolor}
\usepackage{amsmath}
\usepackage{amssymb}
\usepackage[english]{babel}

\usepackage{booktabs}
\usepackage{graphicx}
\usepackage{romannum}
\usepackage{epsfig}

\newtheorem{theorem}{Theorem}
\newtheorem{lemma}{Lemma}
\theoremstyle{remark}
\newtheorem{remark}{Remark}
\newtheorem{problem}{\bf Problem}
\theoremstyle{definition}
\newtheorem{definition}{Definition}
\setlength{\textfloatsep}{3pt}
\setlength{\intextsep}{3pt}

\title{\LARGE \bf
	Safe Control Synthesis Using Environmentally Robust Control Barrier Functions
}

\author{Vahid Hamdipoor$^{1}$, Nader Meskin$^{1}$, and Christos G. Cassandras$^{2}$
	\thanks{$^{1}$Vahid Hamdipoor and Nader Meskin are with the department of Electrical Engineering, Qatar University, Doha, Qatar {\tt\small nader.meskin@qu.edu.qa}.}%
	\thanks{$^{2}$Christos G. Cassandras is with the Division of Systems Engineering and Center for
		Information and Systems Engineering, Boston University, Brookline, MA,
		02446, USA {\tt\small cgc@bu.edu}.}%
			\thanks{This work was supported in part by NSF under grants ECCS-1931600, DMS-1664644, CNS-1645681, CNS-2149511, by AFOSR under grant FA9550-19-1-0158, by ARPA-E under grant DE-AR0001282, by the MathWorks, by the 
Red Hat-Boston University Collaboratory, and by NPRP grant (12S-0228-190177) from the Qatar National Research
Fund, a member of the Qatar Foundation (the statements made herein are
solely the responsibility of the authors).}
}

\begin{document}

	\maketitle
	\thispagestyle{empty}
	\pagestyle{empty}

	\begin{abstract}
		
		In this paper, we study a safe control design for dynamical systems in the presence of uncertainty in a dynamical environment. The worst-case error approach is considered to formulate robust Control Barrier Functions (CBFs) in an optimization-based control synthesis framework. It is first shown that environmentally robust CBF formulations result in second-order cone programs (SOCPs). Then, a novel scheme is presented to formulate robust CBFs which takes the nominally safe control as its desired control input in optimization-based control design and then tries to minimally modify it whenever the robust CBF constraint is violated. This proposed scheme leads to quadratic programs (QPs) which can be easily solved. Finally, the effectiveness of the proposed approach is demonstrated on an adaptive cruise control example.
		
	\end{abstract}

	\section{INTRODUCTION}
	
	Control Barrier Functions (CBFs) have emerged as a powerful means for guaranteeing control system safety in the form of set invariance \cite{ames2014control}. CBFs are often used with Control Lyapunov Functions (CLFs) to simultaneously ensure stability and safety of the system along with state and input constraints \cite{ames2016control}. This approach has been successfully implemented in numerous applications such as mobile robots \cite{gurriet2018online}, robotic manipulators \cite{cortez2019control}, robotic swarms \cite{wang2017safety}, aerial vehicles, racing drones \cite{singletary2022onboard}, and spacecraft docking \cite{breeden2021guaranteed}. CBFs have also been extensively used in the realm of the autonomous vehicles to generate a safe control input in problems such as cruise control, on-ramp merging \cite{ames2014control,xiao2021bridging}, signal free intersections \cite{xu2022general}, and the lane change \cite{chen2020cooperative}, to name a few.
	
	One of the challenges in designing a control input using CBFs is that controllers rely on models or measurements that are assumed to be perfect and free of uncertainty. However, models or measurements are usually uncertain or imperfect and this can result in an unsafe behaviour if not accounted for properly. This problem has been mainly addressed through the development of robust CBFs. However, there exist different perspectives in considering robustness in safety-critical control using CBFs. Generally, the proposed approaches in the literature consider either input disturbances \cite{kolathaya2018input,takano2020robust,dacs2022robust,wang2022disturbance,choi2021robust,buch2021robust,seiler2021control,alan2021safe,jankovic2018robust},   dynamic model mismatch \cite{long2022safe,nguyen2021robust}, or measurement errors \cite{dean2020guaranteeing,cosner2021measurement,yaghoubi2021risk,clark2021control,garg2021robust}. While the aforementioned works study robustness of CBF-based controllers, they do not take into account that the environment is dynamically changing and only consider a quasi-static environment. A good example of ``environment'' is the “other agents” in a multi-agent system with which the ego-agent is interacting, e.g., in the cruise control problem \cite{ames2014control} the lead vehicle can be considered as the environment. 
	
	There exist limited works that consider the effect of a dynamic environment in the design of a safe control input via CBFs  \cite{wu2016safety,he2021rule,chalaki2022barrier,long2021learning,molnar2022safety}. In \cite{wu2016safety,he2021rule}, the effect of a dynamic environment is considered through the notion of time-varying CBFs in which the time-derivative of the CBF is also added to the CBF constraint. Even though the authors in \cite{wu2016safety,he2021rule,chalaki2022barrier} have taken a dynamic environment into consideration, they do not study robustness and assume that perfect information on the environment is available. In \cite{long2021learning}, safe navigation in unknown environments is studied, where on-board range sensing is utilized to construct CBFs online. In the recent work \cite{molnar2022safety}, the notion of {\em Environmental Control Barrier Functions (ECBFs)} has been introduced and robust ECBFs against errors in the environment, in particular a time-delay, are investigated. In this paper, inspired by \cite{molnar2022safety}, we study the notion of \emph{Environmentally Robust Control Barrier Function}  ({\em ER-CBFs}). Similar to \cite{molnar2022safety}, worst-case error-based ER-CBFs are considered, however, unlike \cite{molnar2022safety} ($i$) the Lipschitz constant is not used to define the worst-case error in an ER-CBF constraint and ($ii$) it is not assumed that the dynamics of the environment are known. Moreover, the original quadratic program (QP) for the control synthesis is converted to a second-order cone program (SOCP). Finally, we propose our main robust control scheme, by exploiting the closed-form solution of the original QP.
	
	In summary, the contributions of this paper are as follows. An ER-CBF is introduced  based on errors in the nominal CBF resulting from errors in a dynamically changing environment. This ER-CBF contains the norm control input (which is the optimization decision variable) and leads to the formulation of a control synthesis problem as a SOCP. Secondly,  the nominally safe input obtained from a nominal CBF is considered as the desired input for the SOCP and a new ER-CBF is introduced which does not depend on the norm of the control input, hence the control synthesis problem leads to solving a QP. Then,  the explicit solution to this QP is obtained. Finally,  the effectiveness of the presented results is demonstrated in an adaptive cruise control example.
	
	The rest of the paper is organized as follows. Preliminaries on CBFs and CLFs are reviewed in Section \Romannum{2}. The problem formulation is discussed in Section \Romannum{3}. Main results are presented in Section \Romannum{4}. An adaptive cruise control example is visited in Section \Romannum{5}.  Numerical simulations on the cruise control example are presented in Section \Romannum{6}. Finally, concluding remarks are provided in the last section.

\section{Preliminaries}
Consider a nonlinear control affine system 
\begin{equation}\label{eq:system}
	\dot{x}= f(x) + g(x)u,
\end{equation}
where $x \in \mathcal{X} \subset \mathbb{R}^n$,  $u \in  \mathbb{R}^m$ are state and control input, respectively, and $f:\mathbb{R}^n \rightarrow \mathbb{R}^n $ and $g:\mathbb{R}^n \rightarrow \mathbb{R}^{n\times m} $ are locally Lipschitz continuous functions. To establish exponential stability of system (\ref{eq:system}) without having to define an explicit feedback controller, the notion of a {\em Control Lyapunov Function (CLF)} is introduced.
\begin{definition}(Control Lyapunov Function (CLF) \cite{ames2019control})
	A continuously differentiable function $V:\mathbb{R}^n \rightarrow \mathbb{R} $ is called a Control Lyapunov Function (CLF) for system (\ref{eq:system}) if there exist positive constants $  {c}_1$, $ {c}_2$, and $  {c}_3$ such that for $\forall x \in \mathcal{X}$,
	\begin{equation}
		{c}_1 \| x \|^2 \leq V(x) \leq  {c}_2 \| x \|^2,
	\end{equation}
	\begin{equation}\label{eq:CLF}
		\inf_{u}[L_fV(x) + L_gV(x)u +   {c}_3 V(x)] \leq 0,
	\end{equation}
	where $L_f V(x) \triangleq \nabla_x V(x)f(x) $ and $L_g V(x) \triangleq \nabla_x V(x)g(x)$ are Lie derivatives of $V(x)$ along the vector fields $f$ and $g$, respectively.
	\label{def:CLF}
\end{definition}
%
Given a stabilizing CLF $V(x)$ as in Definition \ref{def:CLF}, any Lipschitz continuous controller $u(t) \in \mathbb{R}^m$ that satisfies (\ref{eq:CLF}), 
$\forall t \geq 0$ exponentially stabilizes system (\ref{eq:system}) to the origin.

For the safe operation of  system (\ref{eq:system}), a safe set $\mathcal{C} \subset \mathcal{X} \subset \mathbb{R}^n$ is defined as a superlevel set of a differentiable function $h:\mathbb{R}^n  \times [0,+\infty)  \rightarrow \mathbb{R}$, i.e.
\begin{equation}\label{eq:safe_set}
	\mathcal{C} = \{ x \in \mathbb{R}^n: h(x,t) \geq 0 \}.
\end{equation}
To ensure that system (\ref{eq:system}) remains in the safety set $\mathcal{C}$, the notion of {\em Control Barrier Function (CBF)} can be defined as follows, in a similar manner as the concept of CLF:
\begin{definition}(Control Barrier Function (CBF)  \cite{he2021rule})
	Given the set $\mathcal{C}$ defined in (\ref{eq:safe_set}), a continuously differentiable function $h:\mathbb{R}^n \times [0,+\infty) \rightarrow \mathbb{R} $ is called as a control barrier function (CBF) for system (\ref{eq:system}), if there exists an extended class $\mathcal{K}_{\infty}$ function\footnote{An extended class $\mathcal{K}_{\infty}$ function is a function $\alpha: \mathbb{R} \rightarrow \mathbb{R}$ that is strictly increasing and $\alpha(0) = 0$. } $\alpha$ for all $x \in \mathcal{X}$ such that 
	\begin{equation}\label{eq:CBF}
		\sup_{u}[\frac{\partial h(x,t)}{\partial t} + L_fh(x,t) + L_gh(x,t)u + \alpha (h(x,t))] \geq 0
	\end{equation}
	where $L_f h(x,t) \triangleq \nabla_x h(x,t) f(x) $ and $L_g h(x,t) \triangleq \nabla_x h(x,t)g(x)$ are Lie derivatives of $h(x,t)$ along the vector fields $f$ and $g$, respectively.
	\label{def:CBF}
\end{definition}
Notice that CBFs used in this paper are time-varying functions. When $h(x,t)$ is time-invariant, it is denoted as $h(x)$ and the condition (\ref{eq:CBF}) is simplified \cite{ames2019control} as follows:
\begin{equation}
	\sup_{u}[ L_fh(x) + L_gh(x)u +   \alpha (h(x))] \geq 0.
\end{equation}
In this study, a particular choice of extended class $\mathcal{K}_{\infty}$ function having the form of $\alpha(h(x,t)) = \nu h(x,t)$ is considered where $\nu \geq 0$ is a CBF design parameter which controls the system behaviours near the boundary of $h(x,t) = 0$. Safety of a set for a specific dynamical system is often stated in terms of the {\em forward invariance} property of that set with respect to the dynamical system:
\begin{definition}
	A set $\mathcal{C} \subset \mathbb{R}^n$ is forward invariant for system (\ref{eq:system}) if its solution starting at $x(0) \in \mathcal{C}$ satisfies $x(t) \in \mathcal{C}$ , $\forall t \geq 0$.
\end{definition}
\begin{definition}
	System (\ref{eq:system}) is safe with respect to set $\mathcal{C}$ defined in (\ref{eq:safe_set}), if the set $\mathcal{C}$ is forward invariant.
\end{definition}
\begin{theorem}[\cite{ames2019control}]\label{thm:CBF}
	Given a CBF $h(x,t)$ as in Definition (\ref{def:CBF}) with the associated set $\mathcal{C}$, any Lipschitz continuous controller $u(t)$ that satisfies (\ref{eq:CBF}), 
	$\forall t \geq 0$ renders $\mathcal{C}$ forward invariant for system (\ref{eq:system}).
\end{theorem}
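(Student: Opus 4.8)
The plan is to show that along any closed-loop trajectory the barrier value $h(x(t),t)$ cannot transition from nonnegative to negative, which is precisely the statement that $\mathcal{C}$ is forward invariant. First I would fix a Lipschitz continuous controller $u(t)$ satisfying (\ref{eq:CBF}) for all $t \geq 0$ and take an initial condition $x(0) \in \mathcal{C}$, i.e. $h(x(0),0) \geq 0$. Since $f$ and $g$ are locally Lipschitz and $u(t)$ is Lipschitz, the closed-loop vector field $f(x) + g(x)u(t)$ is locally Lipschitz in $x$ and measurable in $t$, so the standard existence--uniqueness theorem furnishes a unique solution $x(t)$ on a maximal interval of existence; the remaining task is to control $h$ along this solution.

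Next I would differentiate the scalar map $t \mapsto h(x(t),t)$ along the trajectory. Because $h$ is continuously differentiable and $x(t)$ is absolutely continuous, the chain rule gives
\begin{equation}
\dot{h}(x(t),t) = \frac{\partial h(x,t)}{\partial t} + \nabla_x h(x,t)\,\dot{x} = \frac{\partial h(x,t)}{\partial t} + L_f h(x,t) + L_g h(x,t)\,u(t).
\end{equation}
Since $u(t)$ satisfies (\ref{eq:CBF}) with the particular choice $\alpha(h(x,t)) = \nu h(x,t)$, the right-hand side is bounded below, so I obtain the differential inequality
\begin{equation}
\dot{h}(x(t),t) \geq -\nu\,h(x(t),t), \qquad \forall t \geq 0.
\end{equation}

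The core of the argument is then the Comparison Lemma applied to this inequality. I would introduce the scalar comparison system $\dot{y} = -\nu y$ with $y(0) = h(x(0),0) \geq 0$, whose explicit solution $y(t) = h(x(0),0)\,e^{-\nu t}$ is nonnegative for all $t \geq 0$ because $\nu \geq 0$. The Comparison Lemma yields $h(x(t),t) \geq y(t) \geq 0$, so the trajectory never leaves $\mathcal{C}$; moreover, since $h$ remains bounded below the solution cannot escape in finite time and therefore extends to all $t \geq 0$, establishing forward invariance and hence safety with respect to $\mathcal{C}$. The step I expect to require the most care is this comparison argument, and in particular the preservation of nonnegativity: for the linear $\alpha$ it is immediate from the exponential solution, but stated for a general extended class $\mathcal{K}_{\infty}$ function one must invoke $\alpha(0)=0$ to make $y=0$ an equilibrium of $\dot{y} = -\alpha(y)$ and use monotonicity of $\alpha$ to keep the comparison solution nonnegative, which is what ultimately traps the trajectory inside $\mathcal{C}$.
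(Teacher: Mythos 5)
The paper does not prove this theorem: it is imported verbatim from the cited reference \cite{ames2019control}, and the proof there is exactly the comparison-lemma argument you give, so your route coincides with the standard one. The core of your argument is correct: differentiating $t \mapsto h(x(t),t)$ along the closed-loop trajectory, using the pointwise CBF inequality to get $\dot{h} \geq -\alpha(h)$, and comparing with $\dot{y} = -\alpha(y)$, $y(0) = h(x(0),0) \geq 0$ traps $h$ above zero; for the paper's linear choice $\alpha(r) = \nu r$ the comparison solution is explicit and the uniqueness/Lipschitz subtleties you flag for general extended class $\mathcal{K}_{\infty}$ functions do not arise. The one step I would push back on is the claim that boundedness of $h$ from below rules out finite escape time: $h \geq 0$ confines the trajectory to $\mathcal{C}$, not to a compact set, so forward completeness of the closed-loop solution is an implicit standing assumption (as it effectively is in the paper's Definition 3 of forward invariance) rather than a consequence of the barrier bound. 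This does not affect the invariance conclusion on the maximal interval of existence, but you should state it as a hypothesis rather than derive it.
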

The advantage of CLF and CBF formulations is that they allow the unification of control objectives (represented by CLFs) that are regulated to yield trajectories within desired sets (as enforced by CBFs).
Given a CLF $V(x)$ and a CBF $h(x,t)$, they can be combined into a single controller by using a sequence of Quadratic Programs (QPs).
\section{Problem Statement}
Most of the works to date on designing robust CBFs that account for uncertainty consider a quasi-static environment for which it is assumed that a perfect knowledge of the environment is available. Unlike previous studies, in this paper, similar to \cite{molnar2022safety}, we consider a robust safety-critical control problem with regard to existing uncertainty in the state of the dynamic environment. If $x_s \in \mathbb{R}^p$ represents the state of the dynamic environment, we redefine the safety set $ \mathcal{C}$ and the time-varying CBF $h(x,x_s)$ as 
\begin{equation}\label{eq:safe_set_2}
	\mathcal{C}_s = \{ x\in \mathbb{R}^n, x_s\in \mathbb{R}^p: h(x,x_s) \geq 0 \},
\end{equation}
and $h:\mathbb{R}^n \times \mathbb{R}^p \rightarrow \mathbb{R}$. Consider $x_s = \hat{x}_s+e_s$ and $\dot{x}_s = \dot{\hat{x}}_s+\dot{e}_s$ where $\hat{x}_s$ denotes the measured state of the environment and $e_s$ corresponds to the uncertainty in the measured state. We assume that $e_s$ and $\dot{e}_s$ are bounded, i.e., $||e_s|| <\mathcal{E}_s$ and $||\dot{e}_s|| <\dot{\mathcal{E}}_s$. 
Our goal is to design a feedback controller $u(x,\hat{x}_s)$ for a given system (\ref{eq:system}), such that the trajectories of the closed-loop system remain inside the safety set defined in (\ref{eq:safe_set_2}).
Note that since the dependence of $h$ on time comes only through $x_s(t)$, we wil henceforth write $h(x,x_s)$ instead of $h(x,t)$ and drop $t$ from $x_s(t)$ in order to ease notation.
Next, an Environmentally Robust CBF (ER-CBF) is defined.
\begin{definition}\label{ER-CBF_def}
	A function $h(x,x_s)$ is an Environmentally Robust Control Barrier Function (ER-CBF) for system (\ref{eq:system}) if there exists an extended class $\mathcal{K}$ function $\alpha$ such that for all $x \in \mathcal{X}$  and $\hat{x}_s \in \mathcal{X}_s$  
	\begin{align}\label{def:ER-CBF}
		\sup_{u}&[\frac{\partial h(x,\hat{x}_s)}{\partial t} + L_fh(x,\hat{x}_s) + L_g h(x,\hat{x}_s)u\\ 
		&+ \alpha (h(x,\hat{x}_s)) +\Delta(x,x_s,\hat{x}_s,u) ] \geq 0, \notag
	\end{align}
	where $\Delta(x,x_s,\hat{x}_s,u)\in \mathbb{R}$ is the residual term which appears due to the difference between $x_s$ and $\hat{x}_s$, and $L_f h(x,\hat{x}_s) \triangleq \nabla_x h(x,\hat{x}_s) f(x) $ and $L_g h(x,\hat{x}_s) \triangleq \nabla_x h(x,\hat{x}_s)g(x)$.
\end{definition}
\begin{problem}[Safety under worst-case uncertainty in environment]
	Given a surrounding dynamical system estimate $\hat{x}_s$ with the error bounds of $||e_s|| <\mathcal{E}_s$ and $||\dot{e}_s|| <\dot{\mathcal{E}}_s$, design a feedback controller $u$ for system (\ref{eq:system}) such that the set $\mathcal{C}_s$ is rendered forward invariant for system (\ref{eq:system}) for all $x \in \mathcal{X}$ and  $x_s \in \mathcal{X}_s$.
\end{problem}
\begin{problem}[Robust safety of nominally safe controller]
	Given a surrounding dynamical system estimate $\hat{x}_s$ with the error bounds of $||e_s|| <\mathcal{E}_s$ and $||\dot{e}_s|| <\dot{\mathcal{E}}_s$ and a nominal safe control $u$ for system (\ref{eq:system}) which satisfies (\ref{eq:CBF}), design a robust feedback control $u_\text{rob}$ by minimally modifying $u$, such that the set $\mathcal{C}_s$ is rendered forward invariant for system (\ref{eq:system}) for all $x \in \mathcal{X}$ and  $x_s \in \mathcal{X}_s$.
\end{problem}

Problem 1 seeks to find a robust safe control input based on a given desired control and error in the state of the environment, while Problem 2 only tries to robustify a given nominal safe control with respect to error in the surrounding environment.  In the next section, we present our main results regarding the problems described in this section.

\section{Results}
%
First, the CBF error is quantified due to estimating the state of the surrounding dynamical system, its gradient, and its time-derivative.
Let 
\begin{align*}
	e_h(x,x_s,\hat{x}_s) &\triangleq h(x,x_s) - h(x,\hat{x}_s),\\
	e_{\nabla h}(x,x_s,\hat{x}_s) &\triangleq \nabla h(x,x_s) -\nabla h(x,\hat{x}_s),\\
	e_{\frac{\partial h}{\partial t}}(x,x_s,\hat{x}_s) &\triangleq \frac{\partial h(x,x_s)}{\partial t} - \frac{\partial h(x,\hat{x}_s)}{\partial t}.
\end{align*}
Since $e_s$ and $\dot{e}_s$ are bounded, it follows that $e_h$, $e_{\nabla h}$, and $ e_{\frac{\partial h}{\partial t}}$ are bounded as well. In this study, it is intended to design a control input for (\ref{eq:system}) based on the  worst-case bounds of $e_h$, $e_{\nabla h}$, and $e_{\frac{\partial h}{\partial t}}$ and the worst-case uncertainties for the above errors are considered as follows:
\begin{align}\label{eq:e_h}
	e_h^*(x,x_s,\hat{x}_s) &= \min_{||e_s|| <\mathcal{E}_s, ||\dot{e}_s|| <\dot{\mathcal{E}}_s} e_h(x,x_s,\hat{x}_s),\nonumber \\
	e_{\nabla h}^*(x,x_s,\hat{x}_s) &= \max_{||e_s|| <\mathcal{E}_s,||\dot{e}_s|| <\dot{\mathcal{E}}_s} \|e_{\nabla h}(x,x_s,\hat{x}_s)\|,\\
	e_{\frac{\partial h}{\partial t}}^*(x,x_s,\hat{x}_s) &= \min_{||e_s|| <\mathcal{E}_s,||\dot{e}_s|| <\dot{\mathcal{E}}_s}e_{\frac{\partial h}{\partial t}}(x,x_s,\hat{x}_s)\nonumber.
\end{align}
where it should be noted that $e_h^*$ and $ e_{\frac{\partial h}{\partial t}}^*$ is obtained by minimizing, while $e_{\nabla h}^*$ is obtained by maximization. The reason for this is that (as shown in the next result) $e_h^*$ and $ e_{\frac{\partial h}{\partial t}}^*$ appear as additive error terms in the residual term of the ER-CBF constraint, while $e_{\nabla h}^*$ appears as a multiplicative error term.
From now on, in order to ease notation,  the arguments $(x,x_s,\hat{x}_s)$ are omitted from $e_h^*$, $e_{\nabla h}^*$ and $ e_{\frac{\partial h}{\partial t}}^*$.
The next theorem provides a solution for \emph{Problem 1} and presents a sufficient condition in which an ER-CBF preserves system (\ref{eq:system}) forward invariance in the presence of bounded uncertainties in the state of the surrounding dynamical environment.
\begin{theorem}\label{thm:er_CBF} 
	If $h(x,x_s)$ is an ER-CBF for system (\ref{eq:system}) with the residual term
	\begin{equation}\label{eq:residu1}
		\Delta(x,x_s,\hat{x}_s,u)=  e_{\frac{\partial h}{\partial t}}^* +  \alpha( e_h^*)-e_{\nabla h}^*||f(x)+g(x)u||,
	\end{equation}
	where  $e_h^*$, $e_{\nabla h}^*$, and $e_{\frac{\partial h}{\partial t}}^*$ are defined in (\ref{eq:e_h}), 
	then any Lipschitz continuous controller $u$ that satisfies \eqref{def:ER-CBF} will render system (\ref{eq:system}) forward invariant with respect to $\mathcal{C}_s$.
\end{theorem}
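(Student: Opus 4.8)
The plan is to reduce the robust claim to the \emph{nominal} invariance result, Theorem~\ref{thm:CBF}, applied to the \emph{true} barrier $h(x,x_s)$. By Theorem~\ref{thm:CBF}, forward invariance of $\mathcal{C}_s$ follows as soon as the closed loop makes the true CBF inequality
\[
\frac{\partial h(x,x_s)}{\partial t} + L_f h(x,x_s) + L_g h(x,x_s)u + \alpha(h(x,x_s)) \ge 0
\]
hold along trajectories. The obstacle is that $u$ is synthesized from the measured $\hat{x}_s$ and hence only satisfies the ER-CBF constraint \eqref{def:ER-CBF}, written in terms of $\hat{x}_s$ and the residual $\Delta$. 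So the whole argument amounts to showing that \eqref{def:ER-CBF} with the residual \eqref{eq:residu1} is a \emph{sufficient} condition for the displayed true inequality.

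First I would rewrite each $x_s$-dependent term of the true inequality as the corresponding $\hat{x}_s$-term plus its error, using the definitions preceding \eqref{eq:e_h}: $\frac{\partial h(x,x_s)}{\partial t} = \frac{\partial h(x,\hat{x}_s)}{\partial t} + e_{\frac{\partial h}{\partial t}}$; the state gradient $\nabla_x h(x,x_s) = \nabla_x h(x,\hat{x}_s) + e_{\nabla h}$, so that $L_f h(x,x_s) + L_g h(x,x_s)u = L_f h(x,\hat{x}_s) + L_g h(x,\hat{x}_s)u + e_{\nabla h}\,(f(x)+g(x)u)$; and $h(x,x_s) = h(x,\hat{x}_s) + e_h$. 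After substitution, the true left-hand side equals the measured left-hand side appearing in \eqref{def:ER-CBF} plus the three error contributions $e_{\frac{\partial h}{\partial t}}$, $e_{\nabla h}(f+gu)$, and $\alpha(h(x,\hat{x}_s)+e_h)-\alpha(h(x,\hat{x}_s))$.

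Next I would lower-bound each error contribution by its worst case over $\{\|e_s\|<\mathcal{E}_s,\ \|\dot{e}_s\|<\dot{\mathcal{E}}_s\}$. The additive time-derivative term is bounded below by its minimum, $e_{\frac{\partial h}{\partial t}} \ge e_{\frac{\partial h}{\partial t}}^*$; the multiplicative term is bounded via Cauchy--Schwarz, $e_{\nabla h}(f+gu) \ge -\|e_{\nabla h}\|\,\|f(x)+g(x)u\| \ge -e_{\nabla h}^*\|f(x)+g(x)u\|$; and, using the affine choice $\alpha(r)=\nu r$ with $\nu\ge 0$ adopted in the Preliminaries, $\alpha(h(x,\hat{x}_s)+e_h)-\alpha(h(x,\hat{x}_s)) = \nu e_h \ge \nu e_h^* = \alpha(e_h^*)$. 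Collecting the three bounds reproduces exactly the residual $\Delta$ of \eqref{eq:residu1}, so the true left-hand side is no smaller than the measured left-hand side plus $\Delta$, which is nonnegative precisely because $u$ satisfies \eqref{def:ER-CBF}. The true CBF inequality then holds and Theorem~\ref{thm:CBF} closes the argument.

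The step I expect to be the crux is the $\alpha$-term. For a general extended class-$\mathcal{K}$ function there is no reason that $\alpha(h(x,\hat{x}_s)+e_h)-\alpha(h(x,\hat{x}_s))$ is bounded below by the single quantity $\alpha(e_h^*)$ entering $\Delta$; this hinges on the linear $\alpha(r)=\nu r$, for which the difference collapses to $\nu e_h$ and monotonicity (with $\nu\ge 0$ and $e_h\ge e_h^*$) delivers the bound. This is also what dictates the asymmetry in \eqref{eq:e_h}: $e_h^*$ and $e_{\frac{\partial h}{\partial t}}^*$ enter additively and are therefore taken as minima, whereas $e_{\nabla h}^*$ enters through an inner product and is therefore taken as the maximal norm.
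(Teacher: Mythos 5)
Your proposal is correct and follows essentially the same route as the paper's proof: decompose each $x_s$-dependent term into its $\hat{x}_s$-counterpart plus an error, then lower-bound the additive errors by their minima, the gradient error via Cauchy--Schwarz, and the $\alpha$-term via the linearity $\alpha(r)=\nu r$, recovering exactly the residual \eqref{eq:residu1} and concluding by Theorem~\ref{thm:CBF}. Your remark on the crux (that the bound $\alpha(h+e_h)-\alpha(h)\ge\alpha(e_h^*)$ hinges on the affine choice of $\alpha$) is precisely the step the paper also singles out.
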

\begin{proof}
According to Theorem \ref{thm:CBF}, if there exists a control input $u\in \mathbb{R}^m$ such that 
	\begin{equation}\label{eq:ecbf_constraint}
		\frac{\partial h(x,x_s)}{\partial t}+\nabla_x h(x,x_s)(f(x)+g(x)u)+ \alpha (h(x,x_s)) \geq 0,
	\end{equation}
then, the set $\mathcal{C}_s$ is forward invaraint.  By rewriting  $h(x,x_s) = h(x,\hat{x}_s)+e_h(x,x_s,\hat{x}_s)$, $  \nabla_x h(x,x_s) =\nabla_x h(x,\hat{x}_s)+e_{\nabla h}(x,x_s,\hat{x}_s) $, and $ \frac{\partial h(x,x_s)}{\partial t} = \frac{\partial h(x,\hat{x}_s)}{\partial t}+e_{\frac{\partial h}{\partial t}}(x,x_s,\hat{x}_s)$  in (\ref{eq:ecbf_constraint}), it is needed that a lower bound of the left-hand side expression which is still positive in the presence of the worst-case error in the state of the surrounding dynamical system, i.e.,
	\begin{align}\label{eq:ecbf_constraint2}
		&\min_{||e_s|| <\mathcal{E}_s, ||\dot{e}_s|| <\dot{\mathcal{E}}_s}[\frac{\partial h(x,\hat{x}_s)}{\partial t}+e_{\frac{\partial h}{\partial t}}(x,x_s,\hat{x}_s)+\\
		&\nabla_x h(x,\hat{x}_s)(f(x)+g(x)u)+e_{\nabla h}(x,x_s,\hat{x}_s)(f(x)+g(x)u) \notag\\
		&+\alpha ( h(x,\hat{x}_s)+e_h(x,x_s,\hat{x}_s))] \geq 0 \notag.
	\end{align}
Recalling that we consider $\alpha(h(x,t)) = \nu h(x,t)$, the extended class $\mathcal{K}$ function $\alpha$ can be selected such that $\alpha ( h(x,\hat{x}_s)+e_h(x,x_s,\hat{x}_s)) = \alpha ( h(x,\hat{x}_s)) +\alpha(e_h(x,x_s,\hat{x}_s))$. Moreover, due to the Cauchy-Shwarz inequality, $-\|e_{\nabla h}(x,x_s,\hat{x}_s)\| \|f(x)+g(x)u\| \leq e_{\nabla h}(x,x_s,\hat{x}_s)(f(x)+g(x)u)$. By minimizing each term individually in (\ref{eq:ecbf_constraint2}), it follows that
	\begin{align*}
		&\frac{\partial h(x,\hat{x}_s)}{\partial t}+\nabla_x h(x,\hat{x}_s)(f(x)+g(x)u)+\alpha(h(x,\hat{x}_s))\\
		&\underbrace{-e_{\nabla h}^*\|f(x)+g(x)u\|+e_{\frac{\partial h}{\partial t}}^*+ \alpha(e_h^*)}_{\Delta(x,x_s,\hat{x}_s,u)} \geq 0,
	\end{align*}
	which, based on Definition \ref{ER-CBF_def}, implies that $h(x,x_s)$ is an ER-CBF for system (\ref{eq:system}). 
\end{proof}
To obtain the safe input $u$ with respect to a desired control input $u_{\text{des}}$, the following optimization problem is required 
to be solved at each time step to determine a value $u$ kept constant over this time step:
\begin{equation*}
u^* = \textrm{arg} \min_{u} \quad \frac{1}{2} \|u-u_{\text{des}}\|^2 
\end{equation*}
\begin{align}\label{eq:QP2}
\textrm{s.t.}  &\qquad  \frac{\partial h(x,\hat{x}_s)}{\partial t}+L_fh(x,\hat{x}_s)+L_gh(x,\hat{x}_s)u \,+\\
&\alpha(h(x,\hat{x}_s))- e_{\nabla h}^*\|f(x)+g(x)u\|+e_{\frac{\partial h}{\partial t}}^*+ \alpha(e_h^*) \geq 0 \notag
\end{align}
Since the optimization variable $u$ appears in $\|f(x)+g(x)u\|$ above, this problem is no longer a QP. By writing $\frac{1}{2} \|u-u_{\text{des}}\|^2 =\frac{1}{2}(u-u_{\text{des}})^T(u-u_{\text{des}}) =  \frac{1}{2} \|u\|^2 +u_{\text{des}}^Tu + \frac{1}{2}\|u_{\text{des}}\|^2$ and removing the term $\frac{1}{2}\|u_{\text{des}}\|^2$ from the objective function as it is a constant, a slack variable $q$ can be used to restate the  optimization problem  as follows:
\begin{align*}
[u^*,q^*]^T &= \textrm{arg} \min_{u,q } \quad  q - u_{\text{des}}^Tu\\
&\textrm{s.t.}  \quad (\ref{eq:QP2}), \,
\frac{1}{2} \|u\|^2 \leq q,
\end{align*}
with the decision variables $u$ and $q\geq 0$. This problem is a second-order cone program (SOCP). The second constraint above can be written as a rotated second-order cone condition \cite{calafiore2014optimization}, which  leads to the following SOCP:
\begin{align*}
[u^*,q^*]^T &= \textrm{arg} \min_{u,q} \quad  q - u_{\text{des}}^Tu\\
&\textrm{s.t.}  \quad  (\ref{eq:QP2}), \,
\left\|\left[\begin{matrix}
	\sqrt{2}u\\
	q-1
\end{matrix}\right]\right\| \leq q+1.
\end{align*}
The difficulty with SOCPs is that, unlike QPs, SOCPs tend to be infeasible very easily \cite{castaneda2021pointwise} and, therefore, sometimes another slack variable is added to the constraint (\ref{eq:QP2}) to ensure the feasibility of the SOCP \cite{dean2020guaranteeing}, which can deteriorate the safe input. Another problem with SOCPs, as reported in \cite{long2022safe}, is that the computation time to solve them is higher than that of QPs, which can affect the real-time control synthesis as well. In \cite{buch2021robust}, it is assumed that the control input $u$ is not overly restrictive, thus permitting sufficient control authority to preserve safety in the presence of uncertainty.

To circumvent solving a SOCP for ER-CBFs at each time step, we propose a novel ER-CBF which is based on computing the control input using the nominal CBF and modifying it whenever the ER-CBF constraint in (\ref{eq:QP2}) is violated. In this way, the control input produced by the nominal CBF is considered as the desired input and it is modified minimally so as to robustly guarantee safety. First, we denote the control input $u$ in the optimization problem with nominal CBFs and ER-CBFs as $u_{\text{nom}}$ and $u_{\text{rob}}$, respectively. To begin with, we present the closed-form solution for the nominal safe input framed as the following QP (solved at each time step):
\begin{align*}
 u_{\text{nom}}^* &= \textrm{arg} \min_{u_{\text{nom}}} \quad  \|u_{\text{nom}} -u_{\text{des}} \|^2 \tag{CBF-QP} \label{CBF-QP}\\ 
\textrm{s.t.}  \quad  &\frac{\partial h(x,x_s)}{\partial t}+  L_fh(x,x_s) \\
&+ L_gh(x,x_s)u_{\text{nom}} + \alpha (h(x,x_s)) \geq 0
\end{align*}
For this QP, the closed-form solution can be obtained using the following Lemma.
\begin{lemma}(\cite{xu2015robustness}, \cite{singletary2021safety})\label{lem:QP-Closed-form}
Consider $h(x,x_s)$ as a CBF for system (\ref{eq:system}) with $L_gh(x,x_s) \neq 0$. The explicit solution to (\ref{CBF-QP}) is given by $u_{\text{nom}}^\ast = u_{\text{des}} + u_{s}$ where 
\begin{equation}
	u_s = 
	\begin{cases}
		-\frac{L_gh(x,x_s)^T}{\|L_gh(x,x_s)\|^2} \Phi_{\text{nom}}(x,x_s,u_{\text{des}}) \quad \textrm{if}\\  \qquad \qquad \qquad \qquad \Phi_{\text{nom}}(x,x_s,u_{\text{des}}) < 0,\\
		0 \qquad \textrm{if}\qquad \qquad \quad  \Phi_{\text{nom}}(x,x_s,u_{\text{des}}) \geq 0
	\end{cases}\notag
\end{equation}
with $ \Phi_{\text{nom}}(x,x_s,u) = \frac{\partial h(x,x_s)}{\partial t}+  L_fh(x,x_s)+ L_gh(x,x_s)u + \alpha(h(x,x_s))$.
\end{lemma}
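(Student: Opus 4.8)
The plan is to recognize (\ref{CBF-QP}) as a convex quadratic program: its objective is the squared Euclidean distance to $u_{\text{des}}$ and its single constraint is affine in the decision variable $u_{\text{nom}}$, so geometrically the minimizer is the Euclidean projection of $u_{\text{des}}$ onto the half-space $\{u:\Phi_{\text{nom}}(x,x_s,u)\geq 0\}$. Abbreviating $a \triangleq L_gh(x,x_s)^T$ and $b \triangleq \frac{\partial h(x,x_s)}{\partial t}+L_fh(x,x_s)+\alpha(h(x,x_s))$, the constraint becomes $a^Tu_{\text{nom}}+b\geq 0$ with $\Phi_{\text{nom}}(x,x_s,u_{\text{nom}})=a^Tu_{\text{nom}}+b$. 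The hypothesis $L_gh(x,x_s)\neq 0$ guarantees $\|a\|^2>0$, making every division below well defined; since the objective is strictly convex and the lone constraint is affine, the Karush-Kuhn-Tucker (KKT) conditions are necessary and sufficient for the unique global minimizer.

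First I would form the Lagrangian $\mathcal{L}(u_{\text{nom}},\lambda)=\tfrac{1}{2}\|u_{\text{nom}}-u_{\text{des}}\|^2-\lambda(a^Tu_{\text{nom}}+b)$ with $\lambda\geq 0$ (using the equivalent $\tfrac12$-scaled objective, which shares the same minimizer) and record the KKT system: stationarity $u_{\text{nom}}-u_{\text{des}}-\lambda a=0$, primal feasibility $a^Tu_{\text{nom}}+b\geq 0$, dual feasibility $\lambda\geq 0$, and complementary slackness $\lambda(a^Tu_{\text{nom}}+b)=0$. Stationarity immediately yields the candidate $u_{\text{nom}}^\ast=u_{\text{des}}+\lambda a$, so the entire problem collapses to determining the scalar $\lambda$, equivalently $u_s=\lambda a$.

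Next I would resolve $\lambda$ through the complementary-slackness dichotomy. If $\Phi_{\text{nom}}(x,x_s,u_{\text{des}})=a^Tu_{\text{des}}+b\geq 0$, then $\lambda=0$ (so $u_{\text{nom}}^\ast=u_{\text{des}}$) is already primal feasible and satisfies every KKT condition, giving $u_s=0$. If instead $\Phi_{\text{nom}}(x,x_s,u_{\text{des}})<0$, then $\lambda=0$ would violate primal feasibility, forcing $\lambda>0$; complementary slackness then makes the constraint active, $a^Tu_{\text{nom}}^\ast+b=0$. Substituting $u_{\text{nom}}^\ast=u_{\text{des}}+\lambda a$ gives $a^Tu_{\text{des}}+b+\lambda\|a\|^2=0$, so $\lambda=-\Phi_{\text{nom}}(x,x_s,u_{\text{des}})/\|a\|^2>0$ and hence $u_s=\lambda a=-\frac{L_gh(x,x_s)^T}{\|L_gh(x,x_s)\|^2}\Phi_{\text{nom}}(x,x_s,u_{\text{des}})$, reproducing the two branches in the statement.

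I do not expect a genuine obstacle here, since the result is essentially the closed-form half-space projection; the care required is structural rather than computational. The two points to verify cleanly are (i) that convexity together with the affine constraint qualification promotes the stationary KKT point from a mere critical point to the global minimizer, and (ii) that the dual-feasibility requirement $\lambda\geq 0$ is precisely what forces the sign test on $\Phi_{\text{nom}}(x,x_s,u_{\text{des}})$ to select the correct branch. One could instead cite the standard projection-onto-a-half-space formula directly, but the KKT derivation is self-contained and makes the role of $\lambda\geq 0$ explicit.
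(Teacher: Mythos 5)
Your proof is correct and follows essentially the same route the paper indicates: the lemma is a cited result, and the paper's own sketch for the analogous Theorem~\ref{thm:clform} explicitly invokes the KKT conditions for a strictly convex QP with a single affine constraint, exactly as you do. Your derivation of the two branches via complementary slackness, with $L_gh(x,x_s)\neq 0$ ensuring the projection formula is well defined, matches the standard half-space projection argument underlying the stated closed form.
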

Our method to synthesize the control input using ER-CBFs is inspired by Lemma \ref{lem:QP-Closed-form}, where we intend to use $u_{\text{nom}}^\ast$ as the desired input and find an extra control effort such that the resulting input ensures robustness in the presence of environmental errors. 
First, let
\begin{align*}
&\Phi_\text{rob}(x,\hat{x}_s,u)= \frac{\partial h(x,\hat{x}_s)}{\partial t}+L_fh(x,\hat{x}_s)+L_gh(x,\hat{x}_s)u \,+\\
&\alpha(h(x,\hat{x}_s))- e_{\nabla h}^* \|f(x)+g(x)u\|+e_{\frac{\partial h}{\partial t}}^*+ \alpha(e_h^*).
\end{align*}
Then, in order to design an environmentally safe controller with ER-CBFs, one needs to consider the following SOCP:
\begin{align}
&u_{\text{rob}}^* = \textrm{arg} \min_{u_{\text{rob}}} \quad  \|u_{\text{rob}} -u_{\text{des}} \|^2 \tag{ER-CBF-SOCP} \label{ER-CBF-SOCP}\\ \nonumber
&\textrm{s.t.} \quad \Phi_\text{rob}(x,\hat{x}_s,u_{\textrm{rob}}) \, \geq 0 \notag
\end{align}

Since $u^*_{\text{nom}}$ which is the solution of (\ref{CBF-QP}), will be taken as the desired input for (\ref{ER-CBF-SOCP}), from now on, the desired input for (\ref{ER-CBF-SOCP}) is denoted by $u^{(s)}_{\text{des}}$ to distinguish it with $u_{\text{des}}$ in  (\ref{CBF-QP}). 
Let us define 
$u_\delta \triangleq u_{\text{rob}}^* - u_{\text{nom}}^*$
as the amount of change in the control input $u_{\text{nom}}^*$ such that robustness to the environmental uncertainties is guaranteed. Next, based on  Lemma \ref{lem:QP-Closed-form}, an upper bound on $u_\delta$ is obtained. Note that $u_s$ in Lemma 1 for the case where $u \in \mathbb{R}$ reduces to 
\begin{equation}
	u_s = 
	\begin{cases}
		-\frac{\Phi_{\text{nom}}(x,x_s,u_{\text{des}})}{L_gh(x,x_s)}  \: &\textrm{if} \quad \Phi_{\text{nom}}(x,x_s,u_{\text{des}}) < 0,\\
		0 \qquad &\textrm{if}  \quad  \Phi_{\text{nom}}(x,x_s,u_{\text{des}}) \geq 0.
	\end{cases}\notag
\end{equation}
%
\begin{theorem}\label{thm:u_delta}
Consider $u^*_{\text{nom}}$ to be the solution to (\ref{CBF-QP}) for system (\ref{eq:system}) with $u \in \mathbb{R}$ and let it be the desired control input for (\ref{ER-CBF-SOCP}), i.e., $u^{(s)}_{\text{des}} = u^*_{\text{nom}}$. Define $u_\delta \triangleq u_{\text{rob}}^* - u_{\text{nom}}^*$, with $u_{\text{rob}}^*$ being the solution for (\ref{ER-CBF-SOCP}), and let $e_{\frac{\partial h}{\partial t}}^*$, $ e_h^*$, and $e_{\nabla h}^*$ be obtained from \eqref{eq:e_h} at $\|e_s\| = \mathcal{E}_s$ and $\|\dot{e}_s\| =\dot{\mathcal{E}}_s $. Assuming that $L_gh(x,\hat{x}_s) \neq e^*_{\nabla}\|g(x)\|$, then
{\scriptsize
\begin{align*}
		|u_\delta| \leq \max \left\{\left|\frac{-\Phi_\text{rob}(x,\hat{x}_s,u^*_{\text{nom}})}{L_gh(x,\hat{x}_s)+e^*_{\nabla h}\|g(x)\|}\right|, \left|\frac{-\Phi_\text{rob}(x,\hat{x}_s,u^*_{\text{nom}})}{L_gh(x,\hat{x}_s)-e^*_{\nabla h}\|g(x)\|}\right| \right\}.
\end{align*}
}
\end{theorem}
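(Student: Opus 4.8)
The plan is to exploit the scalar ($m=1$) structure together with the fact that the minimizer $u^*_{\text{rob}}$ of (\ref{ER-CBF-SOCP}) either equals its desired input or lies on the boundary of the feasible set. First I would split into two cases according to the sign of $\Phi_\text{rob}(x,\hat{x}_s,u^*_{\text{nom}})$. If $\Phi_\text{rob}(x,\hat{x}_s,u^*_{\text{nom}}) \geq 0$, then $u^{(s)}_{\text{des}} = u^*_{\text{nom}}$ is feasible and drives the objective to zero, so $u^*_{\text{rob}} = u^*_{\text{nom}}$, giving $u_\delta = 0$; the claimed inequality then holds trivially because its right-hand side is a maximum of nonnegative quantities.

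In the remaining case $\Phi_\text{rob}(x,\hat{x}_s,u^*_{\text{nom}}) < 0$ I would first establish that the constraint is active at the optimum. The map $u \mapsto \Phi_\text{rob}(x,\hat{x}_s,u)$ is concave, since $\|f(x)+g(x)u\|$ is convex in $u$ and enters with the nonpositive coefficient $-e^*_{\nabla h}$, while the remaining terms are affine. Hence the feasible set $\{u : \Phi_\text{rob}(x,\hat{x}_s,u) \geq 0\}$ is a convex interval; as $u^*_{\text{nom}}$ lies strictly outside it, the minimizer of the squared distance $\|u_{\text{rob}} - u^*_{\text{nom}}\|^2$ is its projection onto that interval, which lands on the boundary, so $\Phi_\text{rob}(x,\hat{x}_s,u^*_{\text{rob}}) = 0$.

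Next I would subtract the identity $\Phi_\text{rob}(x,\hat{x}_s,u^*_{\text{rob}}) = 0$ from the expression for $\Phi_\text{rob}(x,\hat{x}_s,u^*_{\text{nom}})$. Every term not depending on $u$ cancels, leaving $\Phi_\text{rob}(x,\hat{x}_s,u^*_{\text{nom}}) = -L_gh(x,\hat{x}_s)\,u_\delta - e^*_{\nabla h}\big(\|f(x)+g(x)u^*_{\text{nom}}\| - \|f(x)+g(x)u^*_{\text{rob}}\|\big)$, where I use $u^*_{\text{nom}} - u^*_{\text{rob}} = -u_\delta$. Applying the mean value theorem to $F(u) \triangleq \|f(x)+g(x)u\|$ rewrites the norm difference as $-F'(\xi)\,u_\delta$ for some intermediate point $\xi$, with $F'(\xi) = (f(x)+g(x)\xi)^{T} g(x)/\|f(x)+g(x)\xi\|$ satisfying $|F'(\xi)| \leq \|g(x)\|$ by the Cauchy-Schwarz inequality. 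Substituting and solving yields the exact relation $u_\delta = -\Phi_\text{rob}(x,\hat{x}_s,u^*_{\text{nom}})\big/\big(L_gh(x,\hat{x}_s) - e^*_{\nabla h}F'(\xi)\big)$.

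Finally I would bound the denominator: since $F'(\xi) \in [-\|g(x)\|,\|g(x)\|]$, the quantity $L_gh(x,\hat{x}_s) - e^*_{\nabla h}F'(\xi)$ lies in the closed interval with endpoints $L_gh(x,\hat{x}_s) \pm e^*_{\nabla h}\|g(x)\|$, so its magnitude is at least $\min\{|L_gh(x,\hat{x}_s)+e^*_{\nabla h}\|g(x)\||,\ |L_gh(x,\hat{x}_s)-e^*_{\nabla h}\|g(x)\||\}$, and taking reciprocals (together with $|{-}\Phi_\text{rob}| = |\Phi_\text{rob}|$) delivers the stated maximum. The step I expect to be the main obstacle is the non-degeneracy of this denominator: the lower bound on $|L_gh - e^*_{\nabla h}F'(\xi)|$ equals $\min\{|a|,|b|\}$ of the two endpoints $a,b$ only when the interval does not straddle zero, which is precisely what the hypothesis $L_gh(x,\hat{x}_s) \neq e^*_{\nabla}\|g(x)\|$ (with the sign of $L_gh$ in the operative regime) is meant to secure, and I would need to state this condition carefully so that neither denominator vanishes and the projection argument remains well posed.
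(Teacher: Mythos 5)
Your argument is essentially correct but follows a genuinely different route from the paper's. The paper posits a perturbation $\delta h$ of the CBF such that the robust constraint coincides with the nominal CBF constraint for $h+\delta h$, invokes the closed form of Lemma~\ref{lem:QP-Closed-form} to get $u_\delta=-\Phi_\text{rob}(x,\hat{x}_s,u^*_{\text{nom}})/(L_gh(x,\hat{x}_s)+L_g\delta h)$, and then asserts $|L_g\delta h|=e^*_{\nabla h}\|g(x)\|$ exactly, so the denominator is one of the two endpoint values and the max follows immediately. You instead work directly with the SOCP: concavity of $u\mapsto\Phi_\text{rob}(x,\hat{x}_s,u)$ makes the feasible set an interval, the projection argument forces the constraint active when $\Phi_\text{rob}(x,\hat{x}_s,u^*_{\text{nom}})<0$, and a difference-quotient computation yields $u_\delta=-\Phi_\text{rob}(x,\hat{x}_s,u^*_{\text{nom}})/(L_gh(x,\hat{x}_s)-e^*_{\nabla h}c)$ with $|c|\leq\|g(x)\|$. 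This buys you something real: you never need to posit the existence of $\delta h$ (which the paper itself only justifies in Remark~1 under a common-optimizer assumption), and you do not need to apply Lemma~\ref{lem:QP-Closed-form} to a constraint that is not literally of CBF form. One small repair: the mean value theorem requires differentiability of $F(u)=\|f(x)+g(x)u\|$, which can fail at a point where $f(x)+g(x)\xi=0$; define $c$ directly as the difference quotient $(F(u^*_{\text{nom}})-F(u^*_{\text{rob}}))/(u^*_{\text{nom}}-u^*_{\text{rob}})$ and use the $\|g(x)\|$-Lipschitz property instead.

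The obstacle you flag at the end is genuine, and it is worth saying plainly that it is present in the paper's proof as well, just hidden. With a continuum of admissible $c\in[-\|g(x)\|,\|g(x)\|]$, the bound $|L_gh(x,\hat{x}_s)-e^*_{\nabla h}c|\geq\min\{|L_gh\pm e^*_{\nabla h}\|g\||\}$ holds only when the interval of possible denominators does not straddle zero, i.e., when $|L_gh(x,\hat{x}_s)|>e^*_{\nabla h}\|g(x)\|$; the stated hypothesis $L_gh(x,\hat{x}_s)\neq e^*_{\nabla}\|g(x)\|$ does not rule out the straddling case, in which the denominator can be arbitrarily small and the claimed bound fails. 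The paper escapes this only by taking the Cauchy--Schwarz step with equality, which implicitly assumes $\nabla_x\delta h$ is collinear with $g(x)$ so that only the two endpoint denominators can occur. If you state and use the strengthened hypothesis $|L_gh(x,\hat{x}_s)|>e^*_{\nabla h}\|g(x)\|$, your proof closes cleanly.
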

\begin{proof}
Since  $e_{\frac{\partial h}{\partial t}}^*$, $ e_h^*$, and $e_{\nabla h}^*$ are obtained at $\|e_s\| = \mathcal{E}_s$ and $\|\dot{e}_s\| =\dot{\mathcal{E}}_s $, there exits $\delta h \in \mathbb{R}$ such that for ER-CBF $h(x,x_s)$ in (\ref{ER-CBF-SOCP}), it follows:
\begin{align}
&	\frac{\partial (h(x,\hat{x}_s)+ \delta h)}{\partial t}+L_f\left(h(x,\hat{x}_s\right)+ \delta h)+L_g(h(x,\hat{x}_s) \notag\\
&	+ \delta h)u+\alpha(h(x,\hat{x}_s)+ \delta h) = \Phi_\text{rob}(x,\hat{x}_s,u).\label{eq:delta_h}
\end{align}
Then, by considering the left-hand side of (\ref{eq:delta_h}) and using Lemma \ref{lem:QP-Closed-form} with $u^{(s)}_{\text{des}} = u_{\text{nom}}^*$ and noticing the fact that $u \in \mathbb{R}$, it follows that:
\begin{equation}
	\begin{cases}
		u_{\text{rob}}^* =  u_{\text{nom}}^* \: &\textrm{if} \:  \Phi_\text{rob}(x,\hat{x}_s,u^*_{\text{nom}}) \geq0\\
		u_{\text{rob}}^* =  u_{\text{nom}}^*-\frac{\Phi_\text{rob}(x,\hat{x}_s,u^*_{\text{nom}})}{L_g(h(x,\hat{x}_s)+\delta h)} \: &\textrm{if} \:  \Phi_\text{rob}(x,\hat{x}_s,u^*_{\text{nom}}) < 0
	\end{cases}
\end{equation}
Thus, for the case $\Phi_\text{rob}(x,\hat{x}_s,u^*_{\text{nom}}) \geq0$, $u_\delta$ is zero,  and for the case where $\Phi_\text{rob}(x,\hat{x}_s,u^*_{\text{nom}}) <0$, it follows: 
\begin{equation}\label{eq:u_delta}
u_\delta =- \frac{\Phi_\text{rob}(x,\hat{x}_s,u^*_{\text{nom}})}{L_g(h(x,\hat{x}_s)+\delta h)} =- \frac{\Phi_\text{rob}(x,\hat{x}_s,u^*_{\text{nom}})}{L_gh(x,\hat{x}_s)+L_g\delta h}.
\end{equation}
Note that $ L_g\delta h = \nabla_x \delta h g(x)$. Although, the exact value for $ L_g\delta h$ is not known, recall that $\|\nabla_x \delta h\| = e_\nabla ^*$ in Theorem \ref{thm:er_CBF}. Since $u \in \mathbb{R}$, it also follows that $L_gh(x,\hat{x}_s) \in \mathbb{R}$ and $ L_g\delta h \in \mathbb{R}$. Therefore, 
\begin{equation}
	| L_g\delta h | = |\nabla_x \delta h g(x)| =e_\nabla ^* \|g(x)\|.
\end{equation}
By replacing possible values of $L_g\delta h$  in \eqref{eq:u_delta}, i.e., $L_g\delta h  =\pm e_\nabla ^* \|g(x)\|$,  it follows that:
{\scriptsize
\begin{align}
	|u_\delta| \leq \max \left\{\left|\frac{-\Phi_\text{rob}(x,\hat{x}_s,u^*_{\text{nom}})}{L_gh(x,\hat{x}_s)+e^*_{\nabla h}\|g(x)\|}\right|,\notag  \left|\frac{-\Phi_\text{rob}(x,\hat{x}_s,u^*_{\text{nom}})}{L_gh(x,\hat{x}_s)-e^*_{\nabla h}\|g(x)\|}\right| \right\}.\notag
\end{align}
}
\end{proof}
\begin{remark}
	If the lower bound considered for \eqref{eq:ecbf_constraint2} in the proof of Theorem \ref{thm:er_CBF} is tight, we can always find such $\delta h$ in \eqref{eq:delta_h}. In other words, if $e_{\frac{\partial h}{\partial t}}^*$, $ e_h^*$, and $e_{\nabla h}^*$ which are obtained by minimizing each term individually in \eqref{eq:ecbf_constraint2}, have a common optimizer, such $\delta h$ can always be found. That is the reason it is assumed in Theorem \ref{thm:u_delta} that $e_{\frac{\partial h}{\partial t}}^*$, $ e_h^*$, and $e_{\nabla h}^*$ are obtained at $\|e_s\| = \mathcal{E}_s$ and $\|\dot{e}_s\| =\dot{\mathcal{E}}_s $. Note that this is not a strong assumption and $e_{\frac{\partial h}{\partial t}}^*$, $ e_h^*$, and $e_{\nabla h}^*$ most of the time are obtained in extreme situations.
\end{remark}
Theorem \ref{thm:u_delta} finds an upper bound on the size of the necessary modification in the nominal safe input to meet the robustness condition. 
Using the results of Theorem \ref{thm:u_delta},  a new constraint is established for the ER-CBF which is no longer dependent on the norm of the control input, hence  the control design problem can be formulated as a QP rather than a SOCP. 
\begin{theorem}\label{thm:er_CBF2}
Let $e_{\frac{\partial h}{\partial t}}^*$, $ e_h^*$, and $e_{\nabla h}^*$ be defined as in (\ref{eq:e_h}).
If $h(x,x_s)$ is an ER-CBF for system (\ref{eq:system}) with  $u \in \mathbb{R}$ and the residual term $	\Delta(x,x_s,\hat{x}_s,u)=  e_{\frac{\partial h}{\partial t}}^* +  \alpha( e_h^*)-e_{\nabla h}^*(||f(x)+g(x)u_{\text{nom}}|| +\|g(x)\| \Bar{u}_\delta)$,
where $u_{\text{nom}}^*$ is the nominal safe input obtained by solving (\ref{CBF-QP}) and
{\footnotesize
\begin{align*}
	\Bar{u}_\delta \triangleq  \max \left\{\left|\frac{-\Phi_\text{rob}(x,\hat{x}_s,u^*_{\text{nom}})}{L_gh(x,\hat{x}_s)+e^*_{\nabla h}\|g(x)\|}\right|, \notag 
	 \left|\frac{-\Phi_\text{rob}(x,\hat{x}_s,u^*_{\text{nom}})}{L_gh(x,\hat{x}_s)-e^*_{\nabla h}\|g(x)\|}\right|\right\},\notag
\end{align*} }
then any Lipschitz continuous controller $u $ that satisfies \eqref{def:ER-CBF} will render system (\ref{eq:system}) forward invariant with respect to $\mathcal{C}_s$.
\end{theorem}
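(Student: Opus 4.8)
The plan is to reduce this statement to Theorem \ref{thm:er_CBF}: I would show that the residual $\Delta$ prescribed here is never larger than the residual \eqref{eq:residu1} of Theorem \ref{thm:er_CBF} when both are evaluated at the robust input $u = u_{\text{rob}}^\ast = u_{\text{nom}}^\ast + u_\delta$. In other words, the new constraint is a \emph{more conservative} (harder to satisfy) version of the SOCP condition $\Phi_\text{rob}(x,\hat{x}_s,u)\geq 0$, so that any controller meeting it automatically meets the hypothesis of Theorem \ref{thm:er_CBF}, and forward invariance follows at once. This also removes the dependence on $\|f(x)+g(x)u\|$, which is precisely the term that turned \eqref{CBF-QP} into a SOCP.

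The key estimate is a triangle-inequality bound on the only $u$-dependent norm appearing in \eqref{eq:residu1}. Since $u\in\mathbb{R}$ and $u = u_{\text{nom}}^\ast + u_\delta$, I would write
\[
\|f(x)+g(x)u\| \leq \|f(x)+g(x)u_{\text{nom}}^\ast\| + \|g(x)\|\,|u_\delta|,
\]
and then invoke Theorem \ref{thm:u_delta}, which bounds the necessary modification by $|u_\delta|\leq \Bar{u}_\delta$, to obtain
\[
\|f(x)+g(x)u\| \leq \|f(x)+g(x)u_{\text{nom}}^\ast\| + \|g(x)\|\,\Bar{u}_\delta.
\]

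Because $e_{\nabla h}^\ast\geq 0$ (it is a maximum of norms in \eqref{eq:e_h}), multiplying the displayed bound by $-e_{\nabla h}^\ast$ reverses its direction, so the residual $\Delta$ of this theorem is bounded above by $e_{\frac{\partial h}{\partial t}}^\ast + \alpha(e_h^\ast) - e_{\nabla h}^\ast\|f(x)+g(x)u\|$, which is exactly the residual \eqref{eq:residu1}. Adding the common affine-in-$u$ terms $\tfrac{\partial h(x,\hat{x}_s)}{\partial t}+L_fh(x,\hat{x}_s)+L_gh(x,\hat{x}_s)u+\alpha(h(x,\hat{x}_s))$ to both sides preserves the inequality, so the left-hand side of \eqref{def:ER-CBF} for this theorem is no larger than $\Phi_\text{rob}(x,\hat{x}_s,u)$. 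Hence if $u$ satisfies \eqref{def:ER-CBF} with the present residual, then $\Phi_\text{rob}(x,\hat{x}_s,u)\geq 0$, which is the ER-CBF condition of Theorem \ref{thm:er_CBF}; invoking that theorem renders $\mathcal{C}_s$ forward invariant for system \eqref{eq:system}.

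The hard part will be justifying $|u_\delta|\leq \Bar{u}_\delta$ for the controller actually in force. Theorem \ref{thm:u_delta} produces $\Bar{u}_\delta$ as a bound on the \emph{minimal} modification of $u_{\text{nom}}^\ast$ needed to satisfy the robust constraint, so one must argue that the controller considered here is this minimally modified input (or is dominated by it), i.e. that it stays inside the ball $|u-u_{\text{nom}}^\ast|\leq\Bar{u}_\delta$ on which the triangle bound is valid. This is exactly where the scalar hypothesis $u\in\mathbb{R}$ and the common-optimizer/tightness assumption underlying \eqref{eq:delta_h} enter; confirming that the resulting QP-feasible set is contained in the SOCP-feasible set of \eqref{ER-CBF-SOCP} is the delicate step, whereas the monotonicity argument above is routine.
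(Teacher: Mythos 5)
Your proposal matches the paper's proof essentially step for step: the paper likewise writes $u_{\text{rob}}^* = u_{\text{nom}}^* + u_\delta$, applies the triangle inequality to $\|f(x)+g(x)(u_{\text{nom}}^*+u_\delta)\|$, and invokes the bound $|u_\delta|\leq\bar{u}_\delta$ from Theorem \ref{thm:u_delta} to conclude that the new residual is a lower bound on the residual \eqref{eq:residu1}, so that Theorem \ref{thm:er_CBF} applies. The ``delicate step'' you flag at the end --- verifying that the controller actually in force stays within $|u-u_{\text{nom}}^*|\leq\bar{u}_\delta$ so that the bound of Theorem \ref{thm:u_delta} is applicable --- is not addressed in the paper's proof, which simply asserts $|u_\delta|\leq\bar{u}_\delta$; you are, if anything, more careful than the source on this point.
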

\begin{proof}
According to Theorem \ref{thm:er_CBF}, if $\Delta(x,x_s,\hat{x}_s,u)=  e_{\frac{\partial h}{\partial t}}^* +  \alpha( e_h^*)-e_{\nabla h}^*(||f(x)+g(x)u|| )$, then any Lipschitz continuous controller $u\in \mathbb{R}$  that satisfies \eqref{def:ER-CBF} will render system (\ref{eq:system}) forward invariant with respect to $\mathcal{C}_s$. Next, we want to get rid of $u$ in this residual term by exploiting the result of Theorem \ref{thm:u_delta}. Considering $u^*_{\text{rob}} = u_{\text{nom}}^\ast + u_\delta$, it follows that:
\begin{align*}
	&e_{\frac{\partial h}{\partial t}}^* +  \alpha( e_h^*)-e_{\nabla h}^*(||f(x)+g(x)( u_{\text{nom}}^\ast + u_\delta)|| ) \geq \\
	& e_{\frac{\partial h}{\partial t}}^* +  \alpha( e_h^*)-e_{\nabla h}^*(||f(x)+g(x) u_{\text{nom}}^\ast\| +  |u_\delta|\|g(x)\|).
\end{align*}
Due to Theorem \ref{thm:u_delta}, $ |u_\delta| \leq \Bar{u}_\delta = \max \{|-\frac{\Phi_\text{rob}(x,\hat{x}_s,u^*_{\text{nom}})}{L_gh(x,\hat{x}_s)+e^*_{\nabla h}\|g\|}|,|-\frac{\Phi_\text{rob}(x,\hat{x}_s,u^*_{\text{nom}})}{L_gh(x,\hat{x}_s)-e^*_{\nabla h}\|g\|}|\}$, thus,
\begin{align*}
	&e_{\frac{\partial h}{\partial t}}^* +  \alpha( e_h^*)-e_{\nabla h}^*(||f(x)+g(x) u_{\text{nom}}^\ast\| +  |u_\delta|\|g(x)\|)\geq \\ &e_{\frac{\partial h}{\partial t}}^* +  \alpha( e_h^*)-e_{\nabla h}^*(||f(x)+g(x) u_{\text{nom}}^\ast\| +  \Bar{u}_\delta\|g(x)\|).
\end{align*}
This completes the proof.
\end{proof}
Notice that the residual term in Theorem \ref{thm:er_CBF2}, unlike the one in Theorem \ref{thm:er_CBF}, does not depend on the control input, which is the decision variable in the optimization-based control design. In fact, in the residual term of Theorem \ref{thm:er_CBF2}, we have $u_{\text{nom}}^*$ and $\Bar{u}_\delta$ which are not decision variables and hence they are available before solving the optimization problem. Hence, one can utilize a QP to compute the control input instead of a SOCP. Thus, to design an environmentally safe controller with ER-CBFs,  the following QP is considered:
\begin{align}
& u_{\text{rob}}^* = \textrm{arg} \min_{u_{\text{rob}}} \quad  \|u_{\text{rob}} -u_{\text{nom}}^* \|^2 \tag{ER-CBF-QP} \label{ER-CBF-QP}\\ \nonumber
&\textrm{s.t.}   \quad  \frac{\partial h(x,\hat{x}_s)}{\partial t}+L_fh(x,\hat{x}_s)+L_gh(x,\hat{x}_s)u_{\text{rob}} \\&+\alpha(h(x,\hat{x}_s))- e_{\nabla h}^*(\|f(x)+g(x)u_{\text{nom}}^\ast\|+ \bar{u}_\delta\|g(x)\|)\notag\\
&+e_{\frac{\partial h}{\partial t}}^*+ \alpha(e_h^*) \geq 0.  \notag
\end{align}
In the sequel,  the closed-form solution of (\ref{ER-CBF-QP}) tackling \emph{Problem 2} is presented.
\begin{theorem}\label{thm:clform} 
Consider $h(x,x_s)$ as an ER-CBF for system (\ref{eq:system}) with $u \in \mathbb{R}$ and $L_gh(x,\hat{x}_s) \neq 0$. The solution to (\ref{ER-CBF-QP}) is given by $u_{\text{rob}}^\ast = u_{\text{nom}}^\ast + u_{\hat{\delta}}$ where 
\begin{equation}\label{u_delta_hat}
u_{\hat{\delta}} = 
	\begin{cases}
		0&\textrm{if}\quad \widehat{\Phi}_\text{rob}(x,x_s,u_{\text{nom}}^*) \geq 0\\
		-\frac{\widehat{\Phi}_\text{rob}(x,\hat{x}_s,u_{\text{nom}}^*)}{L_gh(x,\hat{x}_s)}  \quad &\textrm{if}\quad \widehat{\Phi}_\text{rob}(x,x_s,u_{\text{nom}}^*) < 0
	\end{cases}
\end{equation}
with $\widehat{\Phi}_\text{rob}(x,\hat{x}_s,u) =  \frac{\partial h(x,\hat{x}_s)}{\partial t}+L_fh(x,\hat{x}_s)+L_gh(x,\hat{x}_s)u+\alpha(h(x,\hat{x}_s))- e_{\nabla h}^*(\|f(x)+g(x)u_{\text{nom}}^\ast\|+ \bar{u}_\delta\|g(x)\|)+e_{\frac{\partial h}{\partial t}}^*+ \alpha(e_h^*)$.
\end{theorem}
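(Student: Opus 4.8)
The plan is to recognize that, notwithstanding the norm terms appearing in $\widehat{\Phi}_\text{rob}$, the constraint of (\ref{ER-CBF-QP}) is \emph{affine} in the decision variable $u_{\text{rob}}$, so that the problem collapses to exactly the structure already resolved in Lemma \ref{lem:QP-Closed-form}. The crucial observation is that in $\widehat{\Phi}_\text{rob}(x,\hat{x}_s,u_{\text{rob}})$ the only term carrying the decision variable is the linear term $L_gh(x,\hat{x}_s)u_{\text{rob}}$: the potentially troublesome quantity $\|f(x)+g(x)u_{\text{nom}}^\ast\|$ is evaluated at the precomputed $u_{\text{nom}}^\ast$ rather than at $u_{\text{rob}}$, and $\bar{u}_\delta$ (defined in Theorem \ref{thm:er_CBF2}) together with $e_{\nabla h}^*$, $e_{\frac{\partial h}{\partial t}}^*$, $\alpha(e_h^*)$, and $\|g(x)\|$ are all fixed before the QP is solved. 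Hence $\widehat{\Phi}_\text{rob}(x,\hat{x}_s,\cdot)$ is an affine function of $u_{\text{rob}}$ with nonzero slope $L_gh(x,\hat{x}_s)$.

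Given this, I would cast (\ref{ER-CBF-QP}) as the strictly convex scalar program $\min_{u_{\text{rob}}}(u_{\text{rob}}-u_{\text{nom}}^\ast)^2$ subject to the single affine inequality $\widehat{\Phi}_\text{rob}(x,\hat{x}_s,u_{\text{rob}})\geq 0$. This is formally identical to (\ref{CBF-QP}), with $u_{\text{nom}}^\ast$ playing the role of the desired input $u_{\text{des}}$ and $\widehat{\Phi}_\text{rob}$ playing the role of $\Phi_{\text{nom}}$. I would then invoke Lemma \ref{lem:QP-Closed-form} directly under these substitutions, using its scalar specialization (the $u\in\mathbb{R}$ form displayed just before Theorem \ref{thm:u_delta}). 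This immediately yields $u_{\text{rob}}^\ast=u_{\text{nom}}^\ast+u_{\hat\delta}$ with $u_{\hat\delta}=0$ when $\widehat{\Phi}_\text{rob}(x,\hat{x}_s,u_{\text{nom}}^\ast)\geq 0$ and $u_{\hat\delta}=-\widehat{\Phi}_\text{rob}(x,\hat{x}_s,u_{\text{nom}}^\ast)/L_gh(x,\hat{x}_s)$ otherwise, which is exactly \eqref{u_delta_hat}.

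Should a self-contained derivation be preferred over citing Lemma \ref{lem:QP-Closed-form}, I would argue by projection/KKT. The unconstrained minimizer of $(u_{\text{rob}}-u_{\text{nom}}^\ast)^2$ is $u_{\text{rob}}=u_{\text{nom}}^\ast$, whose constraint value equals precisely $\widehat{\Phi}_\text{rob}(x,\hat{x}_s,u_{\text{nom}}^\ast)$. If this value is nonnegative, the unconstrained optimum is feasible and hence optimal, giving $u_{\hat\delta}=0$. If it is negative, the constraint is violated at $u_{\text{nom}}^\ast$, so by strict convexity and feasibility the optimum lies on the constraint boundary; solving the affine equation $\widehat{\Phi}_\text{rob}(x,\hat{x}_s,u_{\text{rob}})=0$ for $u_{\text{rob}}$, which is well posed since the slope $L_gh(x,\hat{x}_s)$ is nonzero, returns $u_{\text{rob}}^\ast=u_{\text{nom}}^\ast-\widehat{\Phi}_\text{rob}(x,\hat{x}_s,u_{\text{nom}}^\ast)/L_gh(x,\hat{x}_s)$, i.e.\ $u_{\hat\delta}=-\widehat{\Phi}_\text{rob}(x,\hat{x}_s,u_{\text{nom}}^\ast)/L_gh(x,\hat{x}_s)$.

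The single genuine subtlety — the step I would flag as the main obstacle — is justifying the affineness claim itself: one must confirm that the norm term is genuinely frozen at $u_{\text{nom}}^\ast$ and that $\bar{u}_\delta$ carries no residual dependence on $u_{\text{rob}}$, so that $u_{\text{rob}}$ enters the constraint only linearly. Everything downstream is then elementary one-dimensional optimization, and the standing hypothesis $L_gh(x,\hat{x}_s)\neq 0$ is exactly what guarantees a unique, well-defined boundary solution in the active case.
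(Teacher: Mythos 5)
Your proposal is correct and follows essentially the same route as the paper: the paper's proof simply notes that the objective and constraint are convex and differentiable, invokes the KKT conditions, and defers to the argument of Lemma \ref{lem:QP-Closed-form}, which is exactly the reduction you carry out (your explicit observation that the norm term is frozen at $u_{\text{nom}}^\ast$, so the constraint is affine in $u_{\text{rob}}$, is the same point the paper leaves implicit). No gaps; your projection/KKT elaboration is just a more self-contained version of the paper's one-line proof.
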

\begin{proof}
Notice that in (\ref{ER-CBF-QP}) both objective function and constraint are convex and continuously differentiable with respect to $u_{\text{rob}}$. Hence, one can apply KKT (Karush–Kuhn–Tucker) conditions to provide necessary and sufficient conditions for optimality \cite{boyd2004convex}. The rest of the proof is similar to the proof of Lemma \ref{lem:QP-Closed-form}.
\end{proof}
\begin{remark}
	Note that (\ref{ER-CBF-SOCP}) presents our solution for Problem 1, and (\ref{ER-CBF-QP}) presents our solution for Problem 2. In (\ref{ER-CBF-SOCP}) the optimization constraint ($\Phi_\text{rob}(x,\hat{x}_s,u)$) is imposing safety and environmental robustness simultaneously, so it can be considered as a robust safe filter that modifies the nominal (and possibly unsafe) control input $u_{des}$. On the other hand, in (\ref{ER-CBF-QP}) the optimization constraint ($\widehat{\Phi}_\text{rob}(x,\hat{x}_s,u)$) is only imposing robustness to the nominally safe control input, so it can be regarded as robustness filter of nominally safe input $u^*_{nom}$. In addition to that, computing the control input via (\ref{ER-CBF-QP}) is faster and less prone to infeasiblity than (\ref{ER-CBF-SOCP})\footnote{In MATLAB, the interior point method (which has a polynomial time complexity) is utilized to solve both QP and SOCP (though with different algorithms). However, computing QP i.e. (\ref{ER-CBF-SOCP}) is more time-consuming than computing SOCP, i.e., (\ref{ER-CBF-SOCP}). This is also has been observed in \cite{long2022safe}.}. Furthermore, (\ref{ER-CBF-QP}) (as we shall see in Figures \ref{fig:delx} and \ref{fig:CBF-CLF}) is slightly more conservative than (\ref{ER-CBF-SOCP}). Moreover, it should be noted that, (\ref{ER-CBF-SOCP}) is applicable to the cases where $u\in \mathbb{R}^m$, while (\ref{ER-CBF-QP}) is only used for $u\in \mathbb{R}$.
\end{remark}
\section{Adaptive cruise control example}
\begin{figure}
	\centering
	\includegraphics[scale =.7]{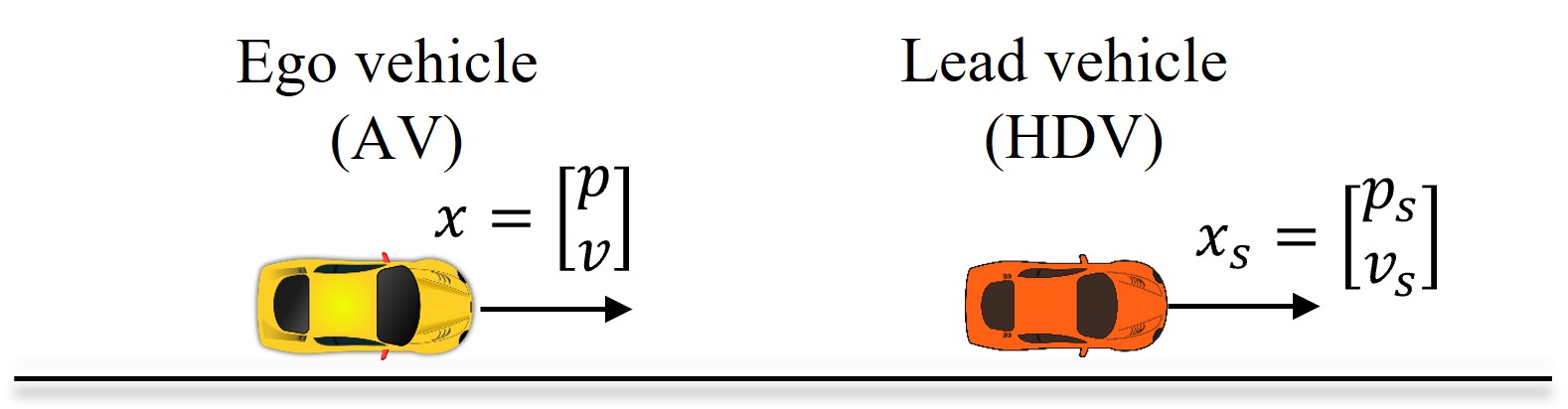}
	\caption{Adaptive cruise control example for mixed traffic where ego vehicle is AV following the lead vehicle which is an HDV.}
	\label{fig:cruise}
\end{figure}
In this section, the proposed robust safe controller is applied to the adaptive cruise control problem where the lead vehicle is a human-driven vehicle (HDV) and the ego vehicle (see Figure \ref{fig:cruise}) which we intend to design control actions for, is an Automated Vehicle (AV).  The following longitudinal dynamics is considered for AVs:
\begin{equation}\label{eq:vehicle_dynamics}
	\begin{aligned}
		\dot{p}&=v,\\
		\dot{v}&=\frac{u}{m}-\frac{F_r(v)}{m},
	\end{aligned}
\end{equation}
where $p$, $v$, and $u$ denote the position, the velocity and the control input for the AV, respectively. The mass of the AV is denoted by $m$ and $F_r(v)$ is the rolling force which is approximated as
\begin{equation}
	F_r(v) = c_0 + c_1v+ c_2v^2,
\end{equation}
with the constants $c_0$, $c_1$, and $c_2$  that can be determined empirically. Equation (\ref{eq:vehicle_dynamics}) can be written in a state-space form as
\begin{equation}\label{eq:vehicle_dynamics_Ctrl_affine}
	\dot{x}
	= \underbrace{\begin{bmatrix}
			v \\
			-\frac{1}{m}F_r \end{bmatrix}}_{f(x)} +\underbrace{
		\begin{bmatrix}
			0 \\
			\frac{1}{m} \end{bmatrix}}_{g(x)}u.
\end{equation}
where $x = [p,v]^\textmd{T}$ represents the state of the AV.  It is assumed that the dynamics of the HDV (the lead vehicle) are unknown and only some uncertain measurements of its state are available.  Let the state of the HDV be given by $p_s$ and $v_s$, denoting the HDV's position and velocity, respectively. To ensure the rear-end safety between AV and HDV, the following constraint should be imposed:
\begin{equation}
	p_s - p \geq  T_hv,
\end{equation}
where $T_h$ is a look ahead (reaction) time. To enforce this constraint in the synthesis of the control input of the AV, the following CBF is considered:
\begin{equation}\label{eq:cbf_veh}
	h(x,x_s) = p_s-p -T_hv -\frac{1}{2}\frac{(v_s-v)^2}{c_dg},
\end{equation}
with $x_s = [p_s,v_s]^\textmd{T}$, and ${c_dg}$ is the maximum deceleration with $g$ and $c_d$  being the gravity acceleration and deceleration factor, respectively. It is assumed that, instead of the exact value of $x_s$,  an estimate of it, $\hat{x}_s$, is available to the AV through its onboard sensors or a road-side coordinator. Therefore, it follows that $p_s =\hat{p}_s + e_p$, $v_s =\hat{v}_s + e_v$ and $\dot{v}_s =\dot{\hat{v}}_s + e_{\dot{v}}$. Then, by assuming known uncertainty bounds between $x_h$ and $\hat{x}_s$, it follows that
\begin{equation}
	|p_s - \hat{p}_s| \leq \mathcal{E}_p, \:  |v_s - \hat{v}_s| \leq \mathcal{E}_v,\:  |\dot{v}_s - \dot{\hat{v}}_s| \leq \mathcal{E}_{\dot{v}}.
\end{equation}
Knowing the worst-case bounds on the errors of state of HDV, we need to quantify the  error in the designed CBF (\ref{eq:cbf_veh}), its gradient,  and its time-derivative as follows:
\begin{align*}
	\qquad e_h(x,x_s,\hat{x}_s) &=  h(x,x_s) -  h(x,\hat{x}_s) \\
	&= e_p +\frac{2e_v(\hat{v}_s-v) +e_v^2}{2c_dg},\\
	\qquad	\quad e_{\nabla h}(x,x_s,\hat{x}_s) &=  \|\nabla h(x,x_s) - \nabla h(x,\hat{x}_s) \|\\ &=\|[0,\frac{e_v}{c_dg}]^T\| = | \frac{e_v}{c_dg}|,\\
	\qquad e_{\frac{\partial h}{\partial t}}(x,x_s,\hat{x}_s) &= \frac{\partial h(x,x_s)}{\partial t} - \frac{\partial h(x,\hat{x}_s)}{\partial t}\\
	&=v_s-\frac{\dot{v}_s(v_s-v)}{c_dg}- \hat{v}_s+\frac{\hat{\dot{v}}_s(\hat{v}_s-v)}{c_dg} \\
	&=e_v-\frac{\dot{v}_se_v+e_{\dot{v}}(\hat{v}_s-v) +e_ve_{\dot{v}}}{c_dg}.
\end{align*}
Now $e_{\frac{\partial h}{\partial t}}^*$, $e_h^*$, and $e_{\nabla h}^*$ can be obtained via (\ref{eq:e_h}). They will be utilized to synthesize the environmentally safe control input for the adaptive cruise control. To compute the desired control input to achieve the control objective,  the following CLF is defined:
\begin{equation*}
	V(x) = (v-v_d)^2,
\end{equation*}
where $v_d$ is the desired velocity on the road. To apply maximum and minimum permissible velocity on the road ($v_{\max}$ and $v_{\min}$),  the following CBFs are defined.
\begin{align*}
h_2(x) &= v_{\max} - v,\\
h_3(x) &= v - v_{\min}.
\end{align*}
 Note that these CBFs only depend on the state of the AV, thus, they will not be used as ER-CBF constraints while implementing them in a QP or SOCP and  are considered as a regular CBF constraint. 
 In the next section,  the control input is designed for this example using (\ref{CBF-QP}), (\ref{ER-CBF-SOCP}) and (\ref{ER-CBF-QP}) and the obtained results are compared.
\section{Simulation Results}

To demonstrate the efficacy of the proposed methods, several simulations are carried out  in MATLAB on the adaptive cruise control example presented in the previous section. MATLAB QUADPROG  and CONEPROG commands are used for solving QPs and SOCPs and ODE45 is used for integrating AV dynamics. To generate the HDV's motion, the so-called linear free-flow model is used \cite{ahmed1999modeling}:
\begin{equation}
	\dot{v}_s(t) = \lambda[v^{(d)} (t) -v_s(t-\tau)] + \epsilon(t),
	\label{eq:free_flow}
\end{equation} 
where $v^{(d)}$ is the desired road velocity, $\tau$ is the HDV driver's reaction time, $\lambda$ is a constant, and $\epsilon$ is a zero-mean Gaussian noise with the variance of $\sigma$.
For the simulations the parameters presented in Table \ref{tab:veh_params} are for the vehicle dynamics for AV and HDV, and $g = 9.81 \, \textrm{m/s}^2$, $ c_d = 0.3$.  To solve the QP and SOCP,  CLF and CBF rates are selected  equal as $\gamma = \nu=  5$. In addition,  the maximum and minimum admissible road speed are considered as $v_{\max} = 120 \textrm{km/h}$ and $v_{\min} = 60 \textrm{km/h}$.
\begin{table}[h!]
	\centering
	\begin{tabular}{||c|c||c|c||}
		\hline
		\multicolumn{2}{||c||}{AV}  & \multicolumn{2}{c||}{HDV }\\
		\hline
		Parameter  & Value &  Parameter & Value \\
		\hline 
		$m$ & 1650 kg & $\lambda$ & 0.309\\
		$c_0$ & 0.1 N & $\sigma$ & 1.13\\
		$c_1$ & 5 $\textrm{Ns/m}$ & $v^{(d)}$ & 100 km/h\\
		$c_2$ & 0.25  $\textrm{Ns}^2 \textrm{/m}$ & $\tau$ &0\\
		\hline
	\end{tabular}
	\caption{Vehicles parameters used in the simulations.}
	\label{tab:veh_params}
\end{table}
To begin the simulation, it is assumed that initial distance between the ego and lead vehicles is $\Delta p = 80 \textrm{m}$ and $v= v_s = 27.8 \, \textrm{m/s}= 100 \, \textrm{km/s}$. Moreover, the worst-case errors for HDV state are considered as $\mathcal{E}_p = 1 \textrm{m}$, $\mathcal{E}_v = 1 \, \textrm{m/s}$ and $\mathcal{E}_{\dot{v}} = 0$. 
In Figure \ref{fig:uncertainty_bounds}, the  uncertainty bound for CBF defined in (\ref{eq:cbf_veh}) is shown using (\ref{CBF-QP}), (\ref{ER-CBF-SOCP}) and (\ref{ER-CBF-QP}). It can be observed from the figure that in the case of \ref{CBF-QP} (the top plot with red color), even though the nominal CBF remains non-negative, the uncertainty bound crosses the $x$-axis and the CBF becomes negative; this is clearly not a safe behavior and the forward invariance property is no longer guaranteed. On the other hand, CBFs obtained via (\ref{ER-CBF-SOCP}) and (\ref{ER-CBF-QP}) remain non-negative in the presence of the uncertainties. It is worth noting that the nominal CBF for (\ref{ER-CBF-SOCP}) and (\ref{ER-CBF-QP}) is more conservative\footnote{ When a CBF has farther distance from x-axis (zero) it is considered to be more conservative.} than (\ref{CBF-QP}) which is indeed expected.
\begin{figure}
	\centering
	\includegraphics[scale =.6]{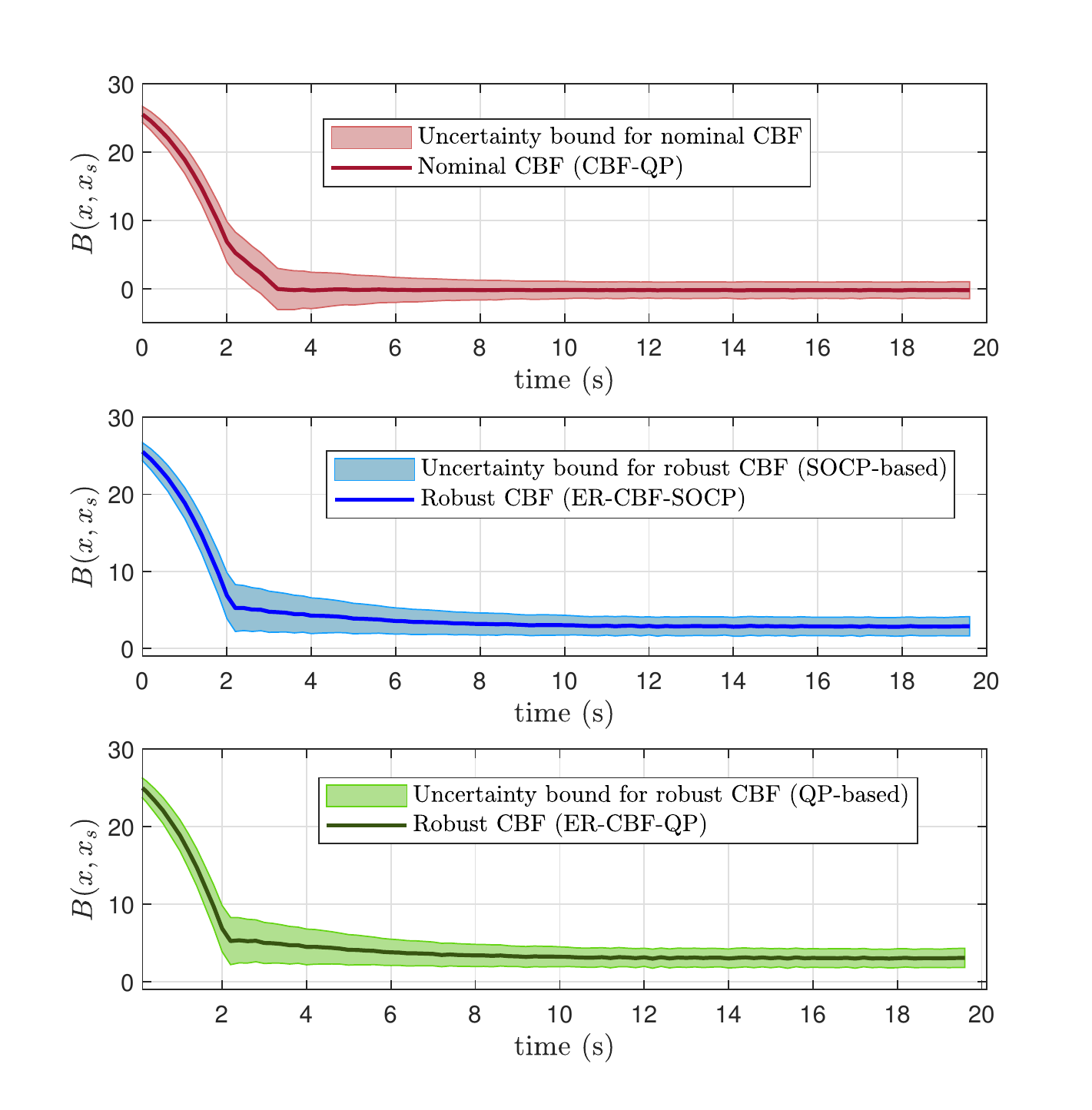}
	\caption{CBF and its uncertainty bound for keeping safe distance in cruise control using nominal CBF (top), robust CBF with SOCP (middle), and robust CBF with QP (bottom).}
	\label{fig:uncertainty_bounds}
\end{figure}

In Figure \ref{fig:delx}, the velocity of AV and HDV, their distance ($\Delta p = p_s-p$) and the control input for AV are shown using three different methods. As it can be seen from the figure, the results of (\ref{ER-CBF-SOCP}) and (\ref{ER-CBF-QP}) almost coincide, and as it is shown in the magnified cross-section of distance plot, (\ref{ER-CBF-QP}) is slightly more conservative. 
\begin{figure}
	\centering
	\includegraphics[scale =.6]{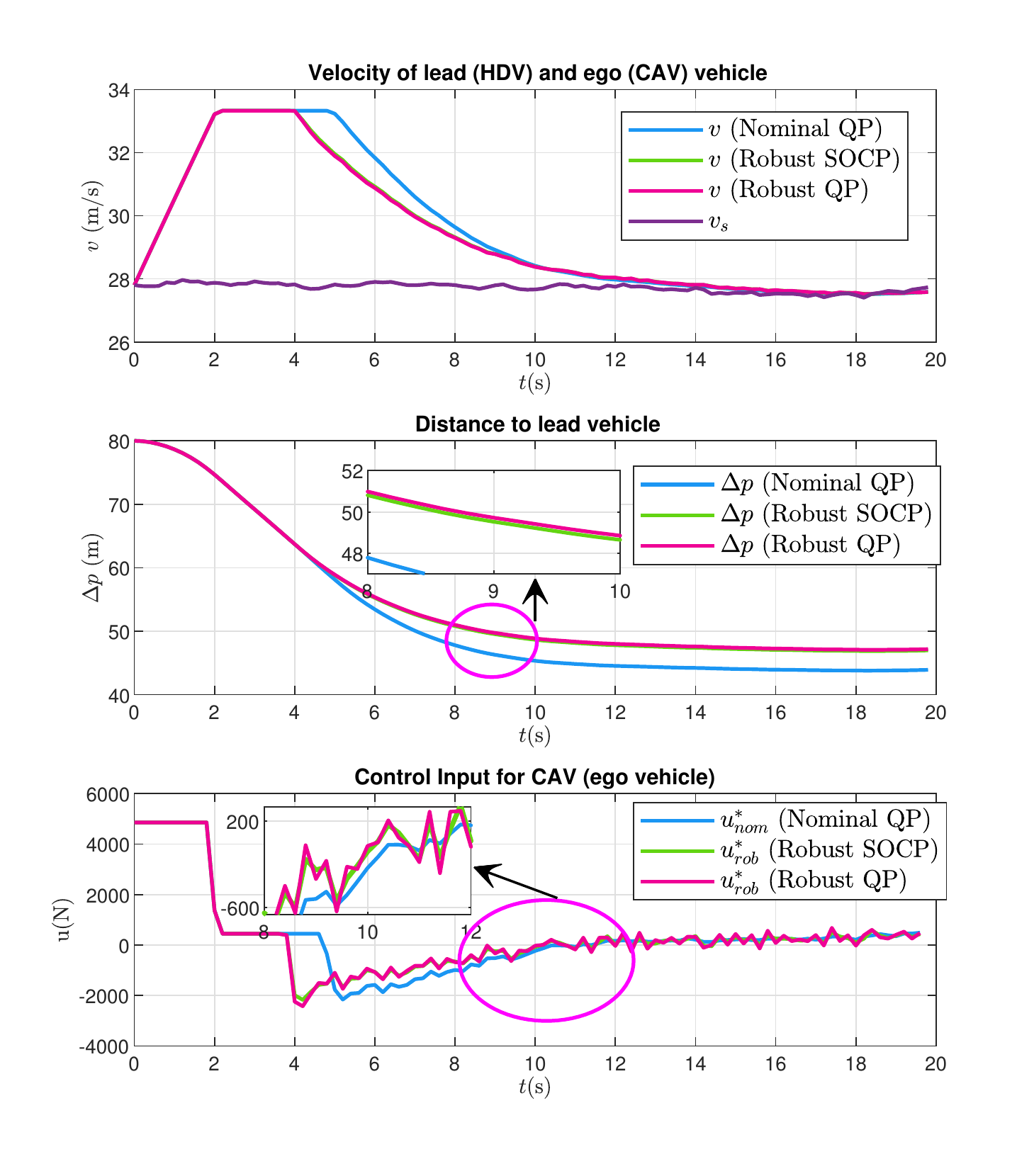}
	\caption{Velocity and distance of ego and lead vehicles in the adaptive cruise control along with the control input for ego vehicle using nominal CBF, robust CBF with SOCP, and robust CBF with QP.}
	\label{fig:delx}
\end{figure}

Figure \ref{fig:CBF-CLF} also shows the CBF and CLF plot for the case of Figure \ref{fig:delx}. Similar to Figure \ref{fig:delx}, the results of (\ref{ER-CBF-SOCP}) and (\ref{ER-CBF-QP}) are coinciding, and the results of (\ref{ER-CBF-QP}) are slightly more conservative. 
\begin{figure}
	\centering
	\includegraphics[scale =.6]{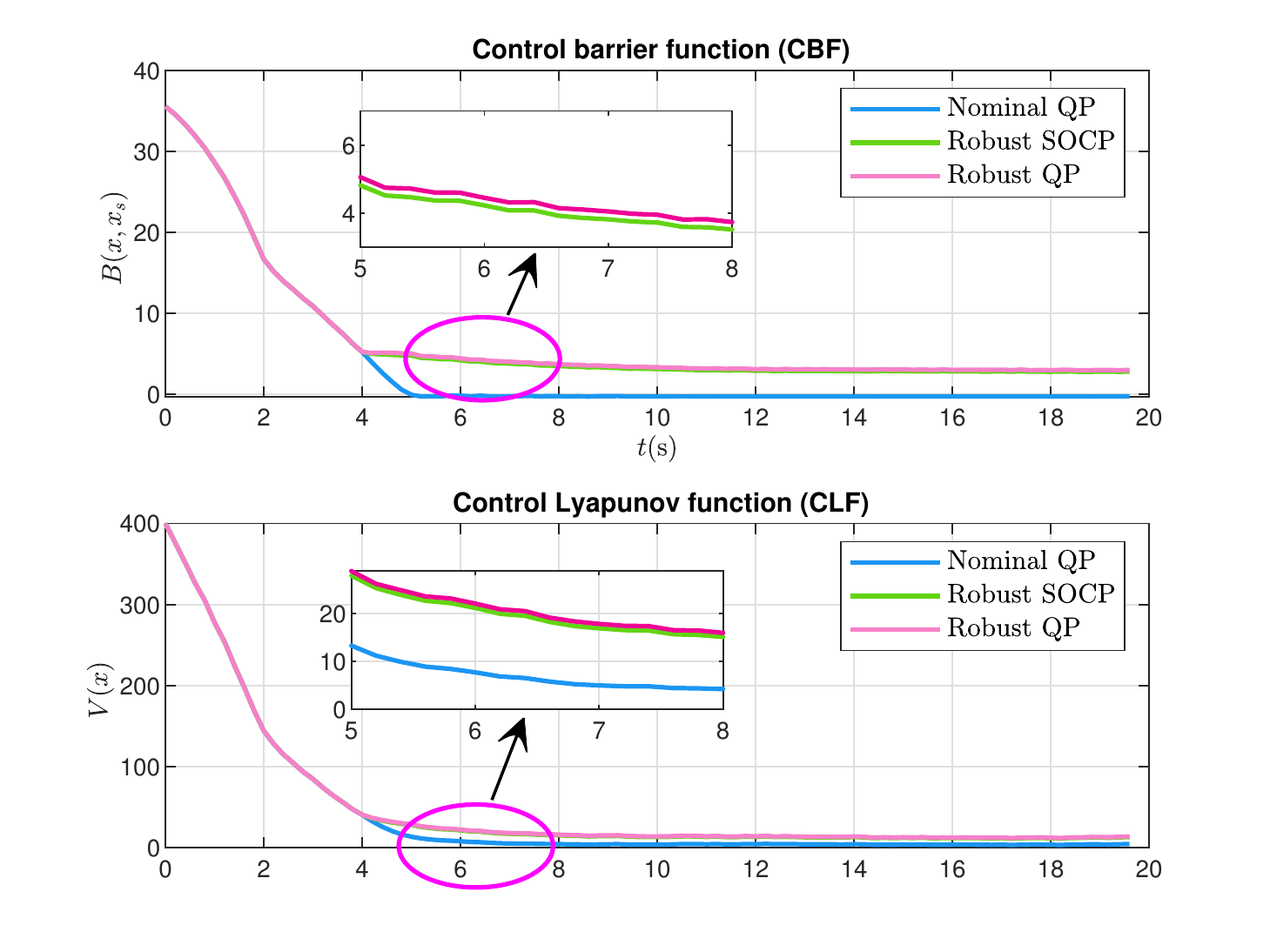}
	\caption{CBF for keeping safe distance with lead vehicle and CLF for achieving desired distance with lead vehicle in cruise control using nominal CBF, robust CBF with SOCP, and robust CBF with QP.}
	\label{fig:CBF-CLF}
\end{figure}
In Figure \ref{fig:ER-CBF-QP} the plot for the control input resulting from the closed form solution (\ref{ER-CBF-QP}) using Theorem \ref{thm:clform} is presented. As it can be seen from the figure, at each time instant, $u_{\text{nom}}^*$ and $u_{\hat{\delta}}$ are computed and then $u_{\text{rob}}^*$ is computed via $u_{\text{rob}}^* = u_{\text{nom}}^* + u_{\hat{\delta}} $ where $u_{\hat{\delta}}$ is defined in \eqref{u_delta_hat} in Theorem \ref{thm:clform}. 
\begin{figure}[thpb]
	\centering
	\includegraphics[scale =0.6]{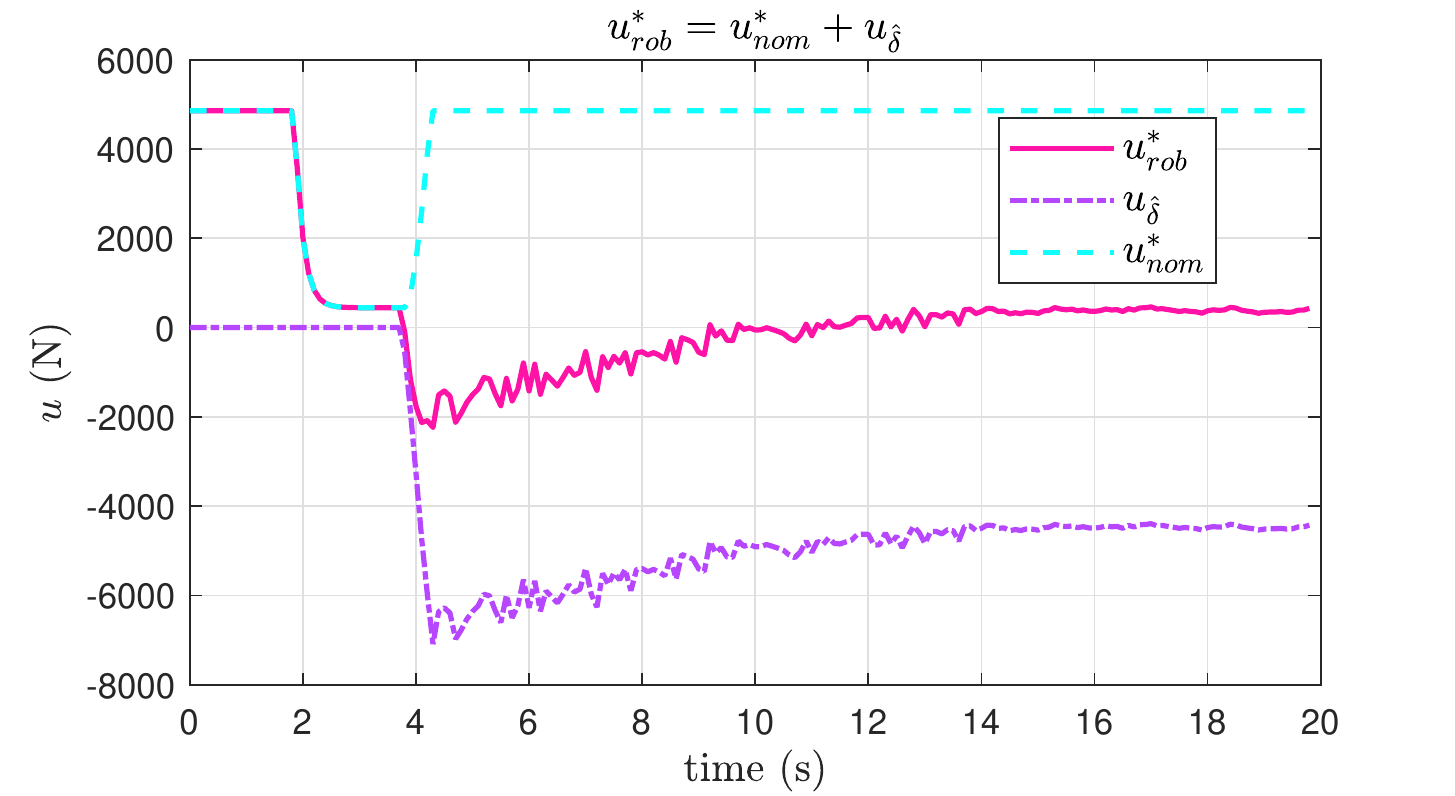}
	\caption{Obtaining robust control input via closed form solution of robust QP.}
	\label{fig:ER-CBF-QP}
\end{figure}
\section{conclusion}
The control synthesis problem using environmentally robust control barrier functions is considered in this paper and it is shown that accounting for the worst-case error in a dynamical environment results in the ER-CBF-SOCP formulation. Then, we present the ER-CBF-QP alternative which minimally modifies the nominal safe input resulting from the solution of CBF-QP. ER-CBF-SOCP and ER-CBF-QP results are almost similar. However, ER-CBF-QP has a better computational time and is less prone to infeasibility. A future direction would be extending ER-CBF-QP for multi-input dynamical systems.

%
%




%
%


 \bibliographystyle{ieeetran} 
\bibliography{JournalPaper.bib}

\end{document}